\newcounter{dummy} \numberwithin{dummy}{section}
\newtheorem{theorem}[dummy]{Theorem}
\newtheorem{definition}[dummy]{Definition}
\newtheorem{lemma}[dummy]{Lemma}
\newtheorem{remark}[dummy]{Remark}
\numberwithin{figure}{section}
\def\({\left(}
\def\){\right)}
\newcommand{\wt}{\widetilde} 
\newcommand{\floor}[1]{\left\lfloor #1 \right\rfloor}
\newcommand{\ceil}[1]{\left\lceil #1 \right\rceil}
\newcommand{\bemph}[1]{{\bf #1}}
\def\R{\mathbb{R}}
\def\E{\mathbb{E}}
\def\P{\mathbb{P}}
\def\1{\mathbf{1}}
\def\D{\mathcal{D}}
\def\L{\mathcal{L}}
\DeclareMathOperator{\Binom}{Binom}
\let\Re\relax
\DeclareMathOperator{\Re}{Re}
\DeclareMathOperator{\ind}{Ind}
\DeclareMathOperator{\indset}{IndSet}
\def\bw{\mathbf{w}}
\def\bx{\mathbf{x}}
\def\by{\mathbf{y}}
\def\bz{\mathbf{z}}
\def\bt{\mathbf{t}}
\def\ba{\mathbf{a}}
\def\bX{\mathbf{X}}
\begin{document}
\title{Average-case reconstruction for the deletion channel: subpolynomially many traces suffice}
\author{
  Yuval Peres
  \thanks{Microsoft Research; \texttt{peres@microsoft.com}}
  \and
  Alex Zhai
  \thanks{Stanford University; \texttt{azhai@stanford.edu}}
}
\maketitle





\begin{abstract}
  The \emph{deletion channel} takes as input a bit string $\bx \in
  \{0,1\}^n$, and deletes each bit independently with probability $q$,
  yielding a shorter string. The \emph{trace reconstruction problem}
  is to recover an unknown string $\bx$ from many independent outputs
  (called ``traces'') of the deletion channel applied to $\bx$.

  We show that if $\bx$ is drawn uniformly at random and $q < 1/2$,
  then $e^{O(\log^{1/2} n)}$ traces suffice to reconstruct $\bx$ with
  high probability. The previous best bound, established in 2008 by
  Holenstein, Mitzenmacher, Panigrahy, and Wieder \cite{HMPW08}, uses
  $n^{O(1)}$ traces and only applies for $q$ less than a smaller
  threshold (it seems that $q < 0.07$ is needed).

  Our algorithm combines several ideas: 1) an alignment scheme for
  ``greedily'' fitting the output of the deletion channel as a
  subsequence of the input; 2) a version of the idea of ``anchoring''
  used in \cite{HMPW08}; and 3) complex analysis techniques from
  recent work of Nazarov and Peres \cite{NP16} and De, O'Donnell, and
  Servedio \cite{DOS16}.
\end{abstract}

\newpage

\section{Introduction}

The \emph{deletion channel} takes as input a bit string $\bx \in
\{0,1\}^n$. Each bit of $\bx$ is (independently of other bits)
retained with probability $p$ and deleted with probability $q := 1 -
p$. The channel then outputs the concatenation of the retained bits;
such an output is called a \emph{trace}. Suppose that the input $\bx$
is unknown. The \emph{trace reconstruction problem} asks the
following: how many i.i.d.\ traces from the deletion channel do we
need to observe in order to determine $\bx$ with high probability?

There are two basic variants of this problem, which we will call the
``worst case'' and ``average case''. In the worst case variant, the
problem is to provide bounds that hold uniformly over all possible
input strings $\bx$. The average case variant supposes that the input
is chosen uniformly at random. In particular, we are allowed to ignore
some ``hard-to-reconstruct'' inputs, as long as they comprise a small
fraction of all $2^n$ possible inputs. In this paper, we study the
average case. Our main result is the following.

\begin{theorem} \label{thm:subpoly-reconstruction}
  Suppose $q < \frac{1}{2}$, and let $\bX \in \{0,1\}^n$ be an unknown
  bit string of length $n$ chosen uniformly at random. There is a
  constant $C_q$ depending only on $q$ such that it is possible to
  reconstruct $\bX$ with probability at least $1 - \frac{C_q}{n}$ using
  at most $\exp\( C_q \sqrt{\log n}\)$ independent samples from the
  deletion channel with deletion probability $q$ applied to $\bX$.
\end{theorem}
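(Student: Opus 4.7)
The plan is to reduce reconstruction of the length-$n$ input $\bX$ to a collection of parallel reconstruction problems on blocks of length $L \approx \log^{3/2} n$. The complex-analytic bound of Nazarov--Peres and De--O'Donnell--Servedio permits reconstruction of a length-$L$ string from $\exp(O(L^{1/3}))$ traces of the deletion channel, so the choice $L = \Theta(\log^{3/2} n)$ makes the per-block cost equal to $\exp(O(\sqrt{\log n}))$, matching the target bound. The high-level strategy is therefore: partition $\bX$ into consecutive blocks $B_1, B_2, \ldots$ of length $L$; for each trace, identify which received bits came from which block; apply the complex-analytic reconstruction separately to each block; and take a union bound over the $O(n/L)$ blocks.

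To identify the per-block portions of a trace, I would combine anchoring with a greedy alignment. Near each block boundary in $\bX$, designate a substring of length $\Theta(\log n)$ as an \emph{anchor}; since $\bX$ is uniformly random, with probability $1 - O(1/n)$ all anchors are distinct across $\bX$ and can serve as markers. For each trace $\by$, the greedy alignment assigns each trace bit $y_i$ to an input position $\phi(i)$, where $\phi(i)$ is the smallest index greater than $\phi(i-1)$ with $x_{\phi(i)} = y_i$. Given $\phi$, the image of each anchor can be located approximately in $\by$, and the segment of $\by$ strictly between two consecutive anchors is declared to be the sub-trace for the block between them.

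Once these sub-traces are extracted, in the ideal case they are distributed as independent outputs of the deletion channel applied to each $B_i$. A direct application of the Nazarov--Peres / De--O'Donnell--Servedio single-bit mean trace argument then distinguishes the true $B_i$ from any alternative block using $\exp(O(L^{1/3})) = \exp(O(\sqrt{\log n}))$ sub-traces, with failure probability small enough that a union bound over blocks and over alternative candidates gives the required $O(1/n)$ overall error.

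The main obstacle is showing that the greedy-alignment-plus-anchoring pipeline actually yields sub-traces close enough to the ideal deletion-channel-per-block distribution. In reality, the greedy $\phi$ can assign a trace bit to the wrong input position, anchor bits can be deleted, and occasional misattributions leak bits from one block into its neighbor. I would prove a local correctness lemma of the form: for random $\bX$, at a typical position the greedy assignment agrees with the true one up to an $O(1)$ shift with probability $1 - 1/\mathrm{poly}(n)$; combined with the length-$\Theta(\log n)$ anchors, this is enough to snap the alignment back into place at every block boundary. The delicate point is quantitative: the statistical perturbation induced by the remaining misalignments must be controlled to within $\exp(-\Omega(\sqrt{\log n}))$ in the norm used by the complex-analytic reconstruction of each block, since that is the precision at which the NP16/DOS16 bound operates, and both the randomness of $\bX$ and the randomness of the deletion channel must be exploited to achieve it.
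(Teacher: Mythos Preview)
Your outline has the right shape but contains a quantitative mismatch that breaks the $e^{O(\sqrt{\log n})}$ bound, and a circularity that the paper resolves but you do not.

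\textbf{Alignment precision versus block length.} The shifted version of the NP16/DOS16 argument (the paper's Lemma~4.1) distinguishes two length-$L$ strings only when the random starting shift $S$ satisfies $\E|S-\E S|\le L^{1/3}$. With $L=\Theta(\log^{3/2} n)$ this forces the alignment error to be $O(\log^{1/2} n)$. Greedy matching alone does not deliver this: the per-position deviation $d_k=t_k-g_k$ has an exponential tail $\P(d_k\ge\lambda)\le Ce^{-c\lambda}$, so to make this $1/\mathrm{poly}(n)$ you need $\lambda=\Theta(\log n)$, not $O(1)$ as you assert. Across the $\Theta(n/L)$ block boundaries the greedy error is therefore $\Theta(\log n)$, not $O(\log^{1/2} n)$. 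If you enlarge $L$ so that $L^{1/3}\ge \log n$, you are at $L=\Theta(\log^3 n)$ and the trace count becomes $\exp(O(L^{1/3}))=n^{O(1)}$, i.e.\ merely polynomial. Your anchors of length $\Theta(\log n)$ do not help: if they are to be seen \emph{intact} in a trace (the only way they sharpen alignment beyond the greedy precision), that happens with probability $p^{\Theta(\log n)}=n^{-\Theta(1)}$, again forcing polynomially many traces. The paper's essential idea, which your proposal lacks, is a \emph{two-scale} alignment: greedy matching gets within $O(\log n)$, and then a \emph{short} anchor of length $\Theta(\log^{1/2} n)$---which survives intact with probability $p^{\Theta(\log^{1/2} n)}=e^{-O(\sqrt{\log n})}$---refines the alignment to within $O(\log^{1/2} n)$, exactly the tolerance the complex-analytic step needs.

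\textbf{Circularity.} Your greedy map $\phi$ is defined using $\bX$ itself (``$\phi(i)$ is the smallest index $>\phi(i-1)$ with $x_{\phi(i)}=y_i$''), but $\bX$ is unknown. You cannot run the alignment for all blocks in parallel. The paper handles this by bootstrapping: first reconstruct an initial segment of length $\Theta(\log^{3/2} n)$ directly via NP16/DOS16, then recover bits one at a time, using only the already-known prefix $\bx^{1:k}$ to compute the greedy alignment and to choose the short anchor. Your proposal needs the same sequential structure, which also interacts with the two-scale alignment above (the short anchor is chosen inside the known prefix).
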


\subsection{Related work}

The study of trace reconstruction for the deletion channel seems to
have been initiated by Batu, Kannan, Khanna and McGregor
\cite{BKKM04}, who were motivated by multiple sequence alignment
problems in computational biology. We focus on the regime where the
deletion probability $q$ is held constant as $n$ grows.

Previously, the best bound in the average case was due to
Holenstein, Mitzenmacher, Panigrahy and Wieder \cite{HMPW08}, who gave an
algorithm for reconstructing random inputs using  polynomially many traces
when $q$ is less than some small threshold $c$.\footnote{The threshold
  $c$ is not given explicitly in \cite{HMPW08}. It seems that by optimizing their
  methods we cannot achieve $c > 0.07$.} Theorem
\ref{thm:subpoly-reconstruction} improves on this result in two ways:
the number of traces is subpolynomial, and we extend the range of
allowed $q$ to the interval $(0,1/2)$.

In \cite{HMPW08} it is also shown that $e^{O(n^{1/2} \log n)}$ traces
suffice for reconstruction with high probability with worst case
input. This was recently improved by Nazarov-Peres \cite{NP16} and
De-O'Donnell-Servedio \cite{DOS16} (simultaneously and independently)
to $e^{O(n^{1/3})}$. Their techniques, which we use in Section
\ref{sec:shifted-reconstruction}, play an important role in our
proofs.

The question of whether the above bounds are optimal remains open. The
best lower bounds known are of order $\log^2 n$ (McGregor, Price and
Vorotnikova \cite{MPV14}) in the average case and order $n$ in the
worst case (\cite{BKKM04}).

Other settings for trace reconstruction include the case when $q
\rightarrow 0$ (\cite{BKKM04}), when insertions and substitutions are
allowed as well as deletions (\cite{KM05}, \cite{VS08}), or when the
strings are taken over an alphabet whose size grows with $n$
(\cite{MPV14}). For a more comprehensive review of the literature, we
refer readers to the introduction of \cite{DOS16} or the survey of
Mitzenmacher \cite{M09}.

\subsection{Outline of approach} \label{subsec:outline}

Let us give a high-level description of the algorithm used to prove
Theorem \ref{thm:subpoly-reconstruction}. Suppose that we have already
reconstructed the first $k$ bits of $\bX$, and
we consider a new trace $\wt{\bX}$. Roughly speaking, our goal is to carry out the following
steps:

\begin{figure}
  \centering
  \def\svgwidth{\columnwidth}
  \input{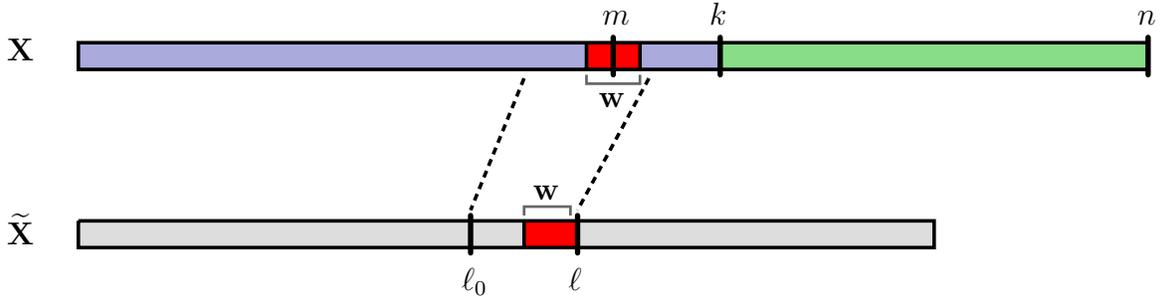}
  \caption{Illustration of the alignment strategy. Dotted lines
    indicate correspondences between positions in $\wt{\bX}$ and
    positions in $\bX$.}
  \label{fig:strategy-outline}
\end{figure}

\begin{enumerate}
\item[] {\bf Alignment:} Find some suitable index $m$ slightly less
  than $k$, and try to (approximately) identify the position $\ell$ in
  $\wt{\bX}$ that corresponds to the $m$-th position of $\bX$. This
  occurs in two stages (see Figure \ref{fig:strategy-outline}):
  \begin{enumerate}
  \item[] {\bf Initial alignment:} Find a position $\ell_0$ in
    $\wt{\bX}$ whose corresponding position in $\bX$ is known to be
    about $O(\log n)$ places ahead of $m$.
  \item[] {\bf Refined alignment:} Consider a specific substring $\bw$
    of $\bX$ located at $m$ and having length $O(\log^{1/2} n)$. Look
    for $\bw$ to occur in $\wt{\bX}$ within $O(\log n)$ characters
    following position $\ell_0$, and take $\ell$ to be the last
    position of this occurrence of $\bw$.
  \end{enumerate}
\item[] {\bf Reconstruction:} Use the bits of $\wt{\bX}$ after $\ell$
  as a trace of the bits of $\bX$ after $m$. From these ``traces'', we
  reconstruct at least $k + 1 - m$ bits of $\bX$ starting from position
  $m$, which in particular includes the $(k+1)$-th bit of $\bX$.
\end{enumerate}
We can repeat the above procedure for each $k$. In each iteration, the
number of traces needed will be $e^{O(\sqrt{\log n})}$. Moreover,
these traces may be reused for each iteration, because we will
ultimately bound the probability of failure by a union bound.

\subsubsection{Initial alignment step}

The initial alignment step is based on fitting $\wt{\bX}$ as a
subsequence of $\bX$ following a ``greedy algorithm''. Let $X_i$ and
$\wt{X}_i$ denote the $i$-th bits of $\bX$ and $\wt{\bX}$,
respectively. We associate $\wt{X}_1$ to the first bit in $\bX$ that
matches $\wt{X}_1$, then associate $\wt{X}_2$ to the next bit in $\bX$
that matches $\wt{X}_2$, and so on (see Figure
\ref{fig:greedy-illustration}). This gives the ``first possible''
occurrence of $\wt{\bX}$ as a subsequence of $\bX$, but does not
necessarily reflect the true alignment of $\wt{\bX}$ to
$\bX$. However, when $q < 1/2$ and $\bX$ is random, it turns out that
this greedy alignment actually matches the true one to within $O(\log
n)$ (stated precisely in Lemma \ref{lemma:trackability}).

Let us briefly describe why this is the case. Suppose that the
position assigned by our greedy algorithm lags behind the true
position. Looking at the next bit in the trace, the true position
should advance by $\frac{1}{1 - q} < 2$ places in
expectation. However, since the bits of $\bX$ are uniformly random,
the position for the greedy algorithm should advance like a geometric
random variable with mean $2$, thereby ``catching up''.

The same greedy matching idea was also considered by Mitzenmacher (see
Section 3 of \cite{M09}) in the slightly different context of decoding
for the deletion channel. Lemma \ref{lemma:trackability} is a
variant of Theorem 3.2 in \cite{M09}. However, many details are
omitted in \cite{M09}, so we provide a self-contained proof
in Section \ref{sec:align-greedy}.

\begin{figure}
  \centering
  \begin{picture}(200, 100)
    \put(0,0){\includegraphics[scale=3.0]{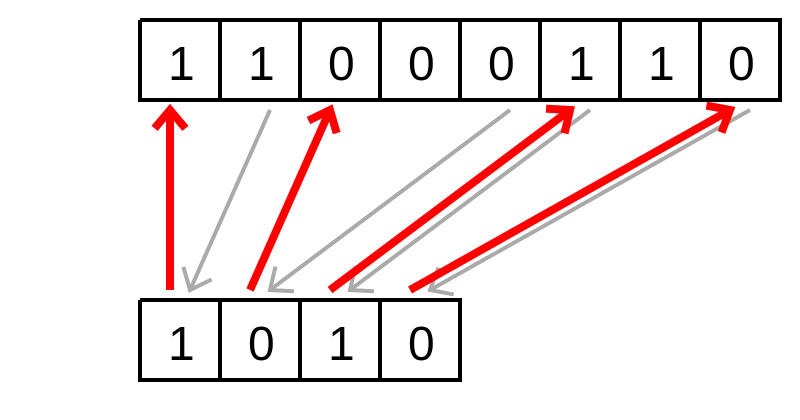}}
    \put(10,98){\resizebox{0.5cm}{!}{$\bX$:}}
    \put(10,18){\resizebox{0.5cm}{!}{$\wt{\bX}$:}}
  \end{picture}
  \caption{Illustration of the greedy algorithm used in the initial
    alignment step. Here, $\bX = 11000110$ and $\wt{\bX} = 1010$. Gray
    arrows point from the positions in $\bX$ that were retained to
    their corresponding positions in $\wt{\bX}$. Red arrows indicate
    the associations produced by our algorithm (i.e. $\wt{X}_1$ goes
    to $X_1$, $\wt{X}_2$ goes to $X_3$, $\wt{X}_3$ goes to $X_6$,
    $\wt{X}_4$ goes to $X_8$). }
  \label{fig:greedy-illustration}
\end{figure}

\subsubsection{Refined alignment step}

For the refined alignment, we take an approach similar to the use of
``anchors'' in \cite{HMPW08}. We again rely on the randomness of $\bX$
and the assumption $q < 1/2$. Consider a substring $\bw$ of length $a
\approx \log^{1/2} n$ which contains the $m$-th bit of $\bX$. (In the
language of \cite{HMPW08}, $\bw$ is our ``anchor''.)

With probability $p^a$, the string $\bw$ appears in our trace because
none of its bits were deleted. There is also a chance that this exact
sequence just happens to appear after deletions to another part of the
input. However, because $\bX$ is random, the latter scenario only
happens with probability $2^{-a} \ll p^a$. Thus, when we see $\bw$ in
our trace, it most likely came from near position $m$ of $\bX$ (we
discard traces if we do not see $\bw$), thereby aligning to within
$O(\log^{1/2} n)$.

We remark here that the above discussion sweeps under the rug a few
considerations about how to avoid accumulation of many small
probabilities of error. In particular, note that the error
probabilities involved during the refined alignment step are like
$e^{-O(\log^{1/2} n)}$, which is not small enough to union bound over
the whole string.

For example, a problem may arise if we have another copy of $\bw$
appearing in $\bX$ that is only $O(\log n)$ positions away from
$m$. In that case, appearances of $\bw$ in $\wt{\bX}$ might come from
either copy of $\bw$ in $\bX$, and it would be hard to distinguish the
two scenarios.

Recall, however, that we have allowed ourselves some flexibility in
the choice of $m$. Note that the initial alignment step means that we
only need to worry about what $\bX$ looks like within distance $O(\log
n)$ from the location $m$. We look at $O(\log^{1/2} n)$ possible
locations of $m$ which are spaced $O(\log n)$ apart, and we argue that
with high probability, at least one of these locations (and the
corresponding choice of $\bw$) behaves in the desired way.

\subsubsection{Reconstruction step}

For the reconstruction step, we analyze bit statistics using methods
based on those of \cite{NP16} and \cite{DOS16}. However, two
adaptations are needed for our setting. First, our reconstruction step
only needs to recover a small number of bits, not the full string. The
statement we need is roughly that $e^{O(r^{1/3})}$ traces are enough
to recover the first $r$ bits of an unknown string, which we apply
with $r = O(\log^{3/2} n)$.

Second, since our alignment is not perfect, we must allow some random
shifts of the input string. The amount of shifting we can tolerate is
relatively small, which explains the need for accurate alignment. The
issue of calculating bit statistics with random shifts also appears in
\cite{HMPW08}, although our techniques for handling this are rather
different from theirs.

These two adaptations can be carried out by small modifications to the
relevant proofs in \cite{NP16} and \cite{DOS16}, which are based on
bounds for Littlewood polynomials on arcs of the unit circle.

\subsection{Notation}

We will use boldface to denote bit strings, while the
values of their bits are non-bolded and subscripted by indices; for
example, $\bx = (x_1, x_2 \ldots , x_n) \in \{0,1\}^n$. Let $|\bx| =
n$ denote the length of $\bx$, and let $\bx^{a:b}$ denote the
substring $(x_a, x_{a+1}, \ldots , x_{b})$. For brevity, we also write
$\bx^{a:} = \bx^{a:|\bx|}$ for the suffix of $\bx$ starting at $x_a$.

Next, we introduce notation for describing the deletion channel. For a
given parameter $p \in (0, 1)$, let $\D^*_p(\bx)$ denote the
distribution over pairs $(\bt, \wt{\bx})$ of sequences defined as
follows: $\bt = (t_1, t_2, \ldots , t_m)$ is the random sequence of
indices of $\bx$ which are retained by the deletion channel applied to
$\bx$ with deletion probability $q = 1 - p$, and $\wt{\bx} =
(\wt{x}_1, \wt{x}_2, \ldots , \wt{x}_m)$ is given by $\wt{x}_i =
x_{t_i}$. Note that the length $m = |\bt|$ is random.

In some cases, we are only interested in the final output $\wt{\bx}$
and not in $\bt$. Thus, we also introduce the notation $\D_p(\bx)$ for
the marginal distribution of $\D^*_p(\bx)$ over the strings
$\wt{\bx}$. We will sometimes use the notation $\P_\bx(\,\cdot\,)$ to
emphasize that the string going through the deletion channel is $\bx$.

At some point, we will want to use $\bt$ to associate several indices
at once in $\wt{\bx}$ to their counterparts in $\bx$, or vice
versa. Consider sets $S \subseteq \{1, 2, \ldots , |\bx|\}$ and
$\wt{S} \subseteq \{1, 2, \ldots , |\wt{\bx}|\}$. Then, we use the
notation
\[ \bt(\wt{S}) := \{ t_s : s \in \wt{S} \} \qquad\text{and}\qquad \bt^{-1}(S) := \{ s : t_s \in S \}, \]
which matches the usual notation for images/preimages if $\bt$ is
regarded as a map from indices in $\wt{\bx}$ to indices in $\bx$.

Finally, in addition to the standard notation $O(\,\cdot\,)$
and $\Omega(\,\cdot\,)$, we also use $O_p(\,\cdot\,)$ and
$\Omega_p(\,\cdot\,)$ in cases where the implied constant may depend
on $p$ but nothing else.

\subsection{Organization of the paper}

The rest of the paper is organized as follows. In Sections
\ref{sec:align-greedy} and \ref{sec:align-seq}, we prove the lemmas
needed to for the initial and refined alignment steps,
respectively. In Section \ref{sec:shifted-reconstruction}, we prove
the lemmas needed for the reconstruction step. Finally, in Section
\ref{sec:main-proof}, we pull together all the ingredients to prove
Theorem \ref{thm:subpoly-reconstruction}.

\subsection*{Acknowledgements}

Most of this work was carried out while the second author was visiting
Microsoft Research in Redmond. He thanks Microsoft for the
hospitality.

\section{Alignment by greedy matching} \label{sec:align-greedy}

Suppose we have a string $\bx \in \{0,1\}^n$ and a sample $(\bt,
\wt{\bx}) \sim \D^*_p(\bx)$. Given only $\bx$ and $\wt{\bx}$, it is
not in general possible to infer uniquely what $\bt$ is. However, we
may obtain an approximation using a ``greedy algorithm'' as described
in Section \ref{subsec:outline}.

To state things precisely, consider any two bit strings $\bx$ and
$\by$. We define a sequence $(g_k(\by, \bx))_{k=1}^{|\by|}$ as
follows:
\begin{itemize}
\item Define $g_1(\by, \bx)$ to be the least index such that
  $x_{g_1(\by, \bx)} = y_1$. If no bits in $\bx$ are equal to $y_1$,
  we set $g_1(\by, \bx) = \infty$.
\item For $k < |\by|$, define inductively $g_{k + 1}(\by, \bx)$ to be
  the least index greater than $g_k(\by, \bx)$ for which
  $x_{g_{k+1}(\by, \bx)} = y_{k+1}$. If no bits in $\bx$ after the
  $g_k(\by, \bx)$-th position are equal to $y_{k+1}$, we set
  $g_{k+1}(\by, \bx) = \infty$. (Note that in particular if $g_k(\by,
  \bx) = \infty$, then $g_{k+1}(\by, \bx) = \infty$).
\end{itemize}

We are primarily interested in the case where $\by = \wt{\bx}$, where
$\wt{\bx}$ is a trace drawn from $\D_p(\bx)$. In this situation,
$g_k(\wt{\bx}, \bx)$ represents the ``earliest possible'' place in
$\bx$ that the $k$-th bit of $\wt{\bx}$ could have come from. For an
illustration, we refer back to Figure
\ref{fig:greedy-illustration}. In that picture, we have $g_1(\wt{\bx},
\bx) = 1$, $g_2(\wt{\bx}, \bx) = 3$, $g_3(\wt{\bx}, \bx) = 6$, and
$g_4(\wt{\bx}, \bx) = 8$.

One may check by a straightforward induction that $g_k(\wt{\bx}, \bx)
\le t_k$ for all $1 \le k \le |\wt{\bx}|$. (This means that
$g_k(\wt{\bx}, \bx)$ is never $\infty$; the possibility of having
$g_k(\by, \bx) = \infty$ doesn't come into play until the proof of
Lemma \ref{lemma:m-alignment}.)  We will show that for retention
probability $p > \frac{1}{2}$ and $\bx$ drawn uniformly at random,
$g_k(\wt{\bx}, \bx)$ is usually not much less than $t_k$. The
following definition makes this precise.

\begin{definition}
  Consider a sequence $\bx \in \{0,1\}^n$, and take $(\bt, \wt{\bx})
  \sim \D^*_p(\bx)$. We say that $\bx$ is \bemph{$(\alpha,
    \beta)$-trackable} if
  \[ \P_\bx\( \max_{1 \le k \le |\bt|} \( t_k - g_k(\wt{\bx}, \bx) \) \ge \lambda \) \le e^{-\frac{\lambda - \alpha}{\beta}}. \]
\end{definition}

The main result of this section is the following lemma.

\begin{lemma} \label{lemma:trackability}
  Suppose $p > \frac{1}{2}$, and let $\bX \in \{0,1\}^n$ be a
  uniformly random string of $n$ bits. There exists $C_p > 0$
  depending only on $p$ such that
  \[ \P\( \text{$\bX$ is $(C_p \log n, C_p)$-trackable} \) \ge 1 - O_p\(\frac{1}{n}\) \]
\end{lemma}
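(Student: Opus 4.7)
The plan is to recast trackability as a tail bound on a one-dimensional random walk with negative drift. Scanning through $\bX$ position by position, define
\[ K_i := \#\{k : g_k(\wt{\bX}, \bX) \le i\}, \qquad N_i := \#\{k : t_k \le i\}, \qquad Q_i := K_i - N_i. \]
Since $g_k \le t_k$, we have $Q_i \ge 0$ always. The key deterministic observation is that $L_k := t_k - g_k \ge \lambda$ forces $K_i \ge k$ and $N_i \le k-1$ throughout $i \in [g_k, t_k - 1]$, so $Q_i \ge 1$ on an interval of length $L_k$. Hence $\max_k L_k$ is at most the longest excursion of $(Q_i)$ above $0$.

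Next I will analyze the dynamics by revealing the bits and retention decisions one position at a time. Write $Q_i - Q_{i-1} = M_i - R_i$, where $M_i := \1[X_i = \wt{X}_{K_{i-1}+1}]$ is the greedy-match indicator and $R_i$ the retention indicator. The crucial point is that when $Q_{i-1} \ge 1$, the trace bit $\wt{X}_{K_{i-1}+1} = X_{t_{K_{i-1}+1}}$ is sourced from a retained position $t_{K_{i-1}+1} > i$, because $K_{i-1} > N_{i-1}$ forces its source strictly into the future. Uniformity of $\bX$ then makes this unrevealed bit independent of $X_i$ and of the history up through time $i-1$, so $M_i \sim \mathrm{Ber}(1/2)$ is independent of $R_i \sim \mathrm{Ber}(p)$. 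Consequently $M_i - R_i$ takes values $-1, 0, +1$ with probabilities $p/2, 1/2, q/2$, and the drift $(q - p)/2$ is strictly negative since $p > 1/2$. A parallel case analysis from $Q_{i-1} = 0$ shows the chain jumps to $1$ with probability $q/2$ and otherwise stays at $0$.

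So $(Q_i)$ is a nearest-neighbor Markov chain on $\Z_{\ge 0}$ with geometric stationary distribution of ratio $q/p < 1$. Standard excursion estimates (for instance, via the exponential martingale $\theta^{Q_i}$ with $\theta = \sqrt{p/q}$) give $\P(\text{excursion length} \ge \lambda) \le e^{-c_p \lambda}$ for some $c_p > 0$ depending only on $p$. Since there are at most $n$ excursions on $[1, n]$, a union bound gives the joint bound $\P(\max_k L_k \ge \lambda) \le n\, e^{-c_p \lambda}$, taken over both $\bX$ and the deletion pattern.

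The last step upgrades this to a bound conditional on $\bX$. For each integer $\lambda \in [1, n]$, Markov's inequality applied to the nonnegative random variable $\P_\bX(\max_k L_k \ge \lambda)$ gives that $\P_\bX(\max_k L_k \ge \lambda) \le n^3 e^{-c_p \lambda}$ except with probability at most $1/n^2$ over $\bX$. Union-bounding over $\lambda \in [1, n]$ (the bound is trivial for $\lambda > n$), with probability $\ge 1 - 1/n$ this inequality holds for every $\lambda$ simultaneously, matching the form $e^{-(\lambda - \alpha)/\beta}$ with $\alpha = 3 \log(n)/c_p$ and $\beta = 1/c_p$, both depending only on $p$. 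The main technical subtlety I expect is justifying that $M_i$ is a fair coin independent of $R_i$ when $Q_{i-1} \ge 1$; this requires carefully tracking which bits of $\bX$ remain unrevealed at each step of the scan.
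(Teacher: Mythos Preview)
Your route is genuinely different from the paper's: the paper indexes by trace position $k$, tracks $d_k = t_k - g_k$, and proves a conditional-independence lemma (their Lemma~2.2, via a bijection) showing that the bits $X_{g_k+1},\ldots,X_n$ remain i.i.d.\ uniform given $(t_1,\ldots,t_k,g_1,\ldots,g_k)$; from this the increment law $d_{k+1}-d_k \stackrel{d}{=} \max(G_p - G_{1/2}, -d_k)$ and an exponential-moment bound follow. Your proposal instead indexes by string position $i$ and tracks $Q_i = K_i - N_i$, reducing $\max_k(t_k-g_k)$ to the longest excursion of a putative nearest-neighbor walk. The deterministic reduction and the final Markov/union-bound transfer are fine, and the excursion argument would be routine once the walk structure is in hand.

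The gap is exactly where you flag it, and it is more than bookkeeping. Your argument for ``$M_i\sim\mathrm{Ber}(1/2)$ independent of $R_i$ when $Q_{i-1}\ge 1$'' is that the target bit $\wt X_{K_{i-1}+1}$ is sourced from a position $>i$ and hence is unrevealed. But any filtration rich enough to make $Q_{i-1}$ measurable must already contain the values of the trace bits $\wt X_1,\ldots,\wt X_{K_{i-1}+1}$ (they were used in the comparisons $M_1,\ldots,M_{i-1}$). In particular, when $R_i=1$ and $Q_{i-1}\ge 1$, the bit $X_i$ equals $\wt X_{N_{i-1}+1}$, which was \emph{already} matched by the greedy at step $g_{N_{i-1}+1}\le i-1$ and hence sits in the history; and the target $\wt X_{K_{i-1}+1}$ may also be in the history (whenever $M_{i-1}=0$). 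So neither side of the comparison is fresh given that filtration, and $M_i$ can be deterministic conditional on it. The transition probabilities you claim may well be correct (small examples are consistent with them), but establishing them requires a symmetry/bijection argument of the same flavor as the paper's Lemma~2.2: one must show that flipping the target bit can be compensated by a cascading set of flips that preserves all the earlier equality constraints encoded in $(M_1,\ldots,M_{i-1})$. That is the real content of the proof, and your sketch does not supply it.
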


Lemma \ref{lemma:trackability} is implied by Theorem 3.2 of
\cite{M09}. However, many details are omitted there, so we devote the
rest of this section to proving Lemma \ref{lemma:trackability}
formally. We use the same general approach, except that it is more
natural for us to focus on the quantity $t_k - g_k(\wt{\bX}, \bX)$
rather than a slightly different quantity considered in
\cite{M09}. The starting point is a conditional independence property
similar to Lemma 3.3 of \cite{M09}.

\begin{lemma} \label{lemma:conditional-independence}
  Let $\bX \in \{0,1\}^n$ be drawn uniformly at random, and suppose
  $(\bt, \wt{\bX}) \sim \D^*_p(\bX)$. Then, for any integer $k \ge 1$,
  conditioned on the event $|\bt| \ge k$ and the values of
  \[ t_1, t_2, \ldots , t_k \text{ and } g_1(\wt{\bX}, \bX), g_2(\wt{\bX}, \bX), \ldots, g_k(\wt{\bX}, \bX), \]
  the bits $X_{g_k(\wt{\bX}, \bX) + 1}, X_{g_k(\wt{\bX}, \bX) + 2}, \ldots ,
  X_n$ are i.i.d.\ uniformly distributed.
\end{lemma}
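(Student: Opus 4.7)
The plan is to induct on $k$, exploiting the independence between the retention pattern and the bits of $\bX$. I will write the deletion channel via iid Bernoulli$(p)$ retention indicators $r_1, \ldots, r_n$, independent of $\bX$, so that $\bt = (i : r_i = 1)$. For specified values $T_1 < \cdots < T_k$ and $G_1 < \cdots < G_k$ with $G_j \le T_j$, the event $\{|\bt| \ge k,\ t_j = T_j \text{ for } j \le k\}$ is a statement purely about $r_1, \ldots, r_{T_k}$ and hence independent of $\bX$. On this event $\wt{X}_j = X_{T_j}$, so the further condition $g_j(\wt{\bX},\bX) = G_j$ translates to the constraint $X_{G_j} = X_{T_j}$ together with $X_i \ne X_{T_j}$ for $G_{j-1} < i < G_j$ (with $G_0 = 0$).

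The base case $k=0$ of the induction is the trivial unconditional iid uniformity of $\bX$. For the inductive step I would assume the statement at level $k-1$: conditional on the analogous event $E^{k-1}$, the bits $X_{G_{k-1}+1}, \ldots, X_n$ are iid uniform, and the $r_i$ for $i > T_{k-1}$ remain iid Bernoulli$(p)$ independently of $\bX$. The new information in passing to $E^k$ splits into two parts: a retention condition ($r_i = 0$ for $T_{k-1} < i < T_k$ and $r_{T_k} = 1$), which is independent of $\bX$ and so leaves its distribution untouched; and the greedy-match condition $g_k = G_k$, which given the previously conditioned $g_{k-1} = G_{k-1}$ and $t_k = T_k$ becomes $X_{G_k} = X_{T_k}$ together with $X_i \ne X_{T_k}$ for $G_{k-1} < i < G_k$.

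To analyze the greedy-match condition I would shift indices and set $U_j := X_{G_{k-1}+j}$, which are iid uniform by the inductive hypothesis. With $G'_k = G_k - G_{k-1}$ and $T'_k = T_k - G_{k-1}$ (so $G'_k \le T'_k$), the constraint becomes $U_{G'_k} = U_{T'_k}$ together with $U_j \ne U_{T'_k}$ for $1 \le j < G'_k$. Conditioning first on the value of $U_{T'_k}$, each of the $G'_k$ required equalities/inequalities has probability exactly $1/2$, giving total probability $2^{-G'_k}$ independently of $U_{T'_k}$; hence $U_{T'_k}$ remains uniform after conditioning. Any $U_j$ with $j > G'_k$ and $j \ne T'_k$ does not appear in either constraint, so it stays uniform and independent of the rest. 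Translating back, $X_{G_k+1}, \ldots, X_n$ are iid uniform, completing the induction.

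The main subtlety I anticipate is the case $T_k > G_k$, in which $X_{T_k}$ lies past $G_k$ and is tied to $X_{G_k}$ by the greedy constraint; one must verify that the constraint does not bias $X_{T_k}$, which is precisely the symmetric-in-$U_{T'_k}$ content of the probability calculation above. The boundary case $T_k = G_k$ (where the equality is automatic and only the inequalities bind) is handled by the same computation.
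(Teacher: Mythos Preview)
Your inductive argument is correct. The key step---showing that conditioning on the event $\{U_{G'_k}=U_{T'_k},\ U_j\neq U_{T'_k}\text{ for }j<G'_k\}$ leaves $(U_j)_{j>G'_k}$ jointly i.i.d.\ uniform---goes through exactly because the conditional probability of that event given $U_{T'_k}=v$ is the same for both values of $v$ and involves only indices $\le G'_k$. One small point worth spelling out: you should verify the \emph{joint} uniformity of $(U_j)_{j>G'_k}$, not just the marginal uniformity of $U_{T'_k}$ and separately of the other $U_j$; but this follows immediately by computing $\P\big((U_j)_{j>G'_k}=(u_j)_{j>G'_k}\mid C\big)$ directly, as your symmetry observation makes clear. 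You should also note that the strengthened inductive hypothesis (future $r_i$ remain i.i.d.\ Bernoulli$(p)$ independent of $\bX$) must be re-established at level $k$; this is immediate since the greedy-match constraint involves only $\bX$ and the retention condition only the $r_i$ with $i\le T_k$.

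Your route is genuinely different from the paper's. The paper first conditions on the \emph{entire} retention pattern $\bt$, then for each fixed target sequence $S=(s_1,\ldots,s_k)$ builds an explicit bijection between $S$-compatible strings ending in one suffix $\bw$ and those ending in another $\bw'$. The bijection is nontrivial: flipping a single bit past $s_k$ can disturb earlier greedy matches (when the flipped position equals some $t_m$ with $m\le k$), so the paper introduces ``influencing sets'' and flips all bits along chains of influence to repair those matches. Your inductive approach sidesteps this machinery entirely by peeling off one retained index at a time and exploiting the simple $\{0,1\}$ symmetry directly. The paper's argument is non-inductive and makes the underlying combinatorial symmetry completely explicit, which is conceptually illuminating; your argument is shorter and more elementary, and also makes transparent why the strengthened statement about the future $r_i$ holds (something the paper does not need because it fixes $\bt$ at the outset).
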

\begin{remark}
  The above lemma also applies when $\bX$ is an infinite
  sequence of i.i.d.\ uniform bits. In this case, the conclusion is
  that all of $(X_i)_{i = g_k(\wt{\bX}, \bX) + 1}^\infty$ are
  i.i.d.\ uniform.
\end{remark}
\begin{proof}
  We first condition on $\bt$; this conditioning will stay in effect
  for the remainder of the proof. Note that all of the $X_i$ are still
  i.i.d.\ uniform, since the $t_i$ depend only on which bits are
  deleted and not on the values of the bits themselves. Since we have
  conditioned on $\bt$, we may regard $g_i(\wt{\bX}, \bX)$ as a
  deterministic function of $\bX$. Therefore, for brevity we will
  write $g_i(\bX) = g_i(\wt{\bX}, \bX)$.

  Next, fix any sequence $S$ of integers $s_1, s_2, \ldots , s_k$
  where $s_1 < s_2 < \cdots < s_k$ and $s_i \le t_i$ for each $i$. We
  say a bit string $\bz$ is \emph{$S$-compatible} if $g_i(\bz) =
  s_i$ for each $i$, and let $E_S$ be the event that $\bX$ is
  $S$-compatible.

  Consider any two strings $\bw, \bw' \in \{0,1\}^{n - s_k}$ which
  differ in a single bit. We will give a bijection between
  $S$-compatible realizations of $\bX$ that end in $\bw$ and those
  that end in $\bw'$. This is enough to establish the lemma, since by
  repeated application, it shows that any two strings for
  $\bX^{(s_k+1):}$ are equally likely conditioned on $E_S$, and this
  holds for arbitrary $S$.

  To carry out the bijection, for any index $j$ with $1 \le j \le k$,
  we define its \emph{influencing set} to be the set
  \[ I_j = \{ t : s_{j-1} < t \le s_j \}, \]
  with the convention $s_0 = 0$. Informally, it is the set of all
  indices $t$ where the value of $X_t$ had some effect on the value of
  $g_j(\bX)$ (which is equal to $s_j$ if $\bX$ is $S$-compatible).

  For any two indices $i$ and $j$ with $1 \le i, j \le k$, we say
  \emph{$i$ directly influences $j$} if $t_i \in I_j$. Note that
  because $s_j \le t_j$, we see that if $i$ influences $j$, then $i
  \le j$ with equality if and only if $s_i = t_i$. We say that
  \emph{$i$ influences $j$} if there is a chain of direct influences
  from $i$ to $j$ (i.e. there exist $c_1, c_2, \ldots , c_N$ such that
  $c_1 = i$, $c_N = j$, and $c_\alpha$ directly influences $c_{\alpha
    + 1}$ for $\alpha = 1, 2, \ldots , N - 1$).

  Suppose now that we have a $S$-compatible sequence $\bz$ that ends
  in $\bw$. We will describe a way to modify $\bz$ so that it remains
  $S$-compatible but ends in $\bw'$. Let $\ell$ be the index at which
  $w_\ell \ne w'_\ell$. First, suppose that $s_k + \ell \ne t_i$ for
  any $i \le k$. Then, we may simply flip the $(s_k + \ell)$-th bit of
  $\bz$ to obtain a $S$-compatible sequence ending in $\bw'$.

  Otherwise, $s_k + \ell = t_m$ for some $m \le k$. Define the sets
  \[ U = \{ m \} \cup \{ j : j \text{ influences } m \} \qquad\text{and}\qquad V = \{ t_m \} \cup \(\bigcup_{j \in U} I_j\). \]
  We claim that by flipping all the bits of $\bz$ at positions in $V$,
  the resulting sequence $\bz'$ ends in $\bw'$ and is
  $S$-compatible. The first claim follows from the fact that $I_j
  \subseteq \{1, 2, \ldots , s_k\}$ for all $j \le k$, so the only bit
  flipped after position $s_k$ is the bit at position $t_m = s_k +
  \ell$.

  To show $S$-compatibility, we show by induction on $j$ that
  $g_j(\bz') = s_j$ for each $j$, where the base case $j = 0$ is
  established by the convention $g_0(\bz') = s_0 = 0$. For the
  inductive step, suppose that $g_i(\bz') = s_i$ for each $i < j$. We
  consider two cases.
  \\\\
  \noindent{\bf Case $j \in U$.} By the definition of $U$, either $j =
  m$ or there exists $j' \in U$ for which $t_j \in I_{j'}$. In either
  case, we see that $t_j \in V$. We also have by definition that $I_j
  \subseteq V$. By $S$-compatibility of $\bz$, the condition $g_j(\bz) =
  s_j$ says that $s_j$ is the first position after $g_{j-1}(\bz) =
  s_{j-1}$ having the same value as position $t_j$. In other words,
  $s_j$ is the unique position in $I_j$ with the same value as
  position $t_j$.

  The bits at positions $t_j$ and elements of $I_j$ are all flipped
  for $\bz'$, so the same property holds in $\bz'$. Since $g_{j-1}(\bz') =
  s_{j-1}$ by the inductive hypothesis, we have $g_j(\bz') = s_j$ as
  well.
  \\\\
  \noindent{\bf Case $j \not\in U$.} Note that $t_m > s_k$, so $t_m
  \not\in I_j$. Since $j \not\in U$, it follows that $I_j$ is disjoint
  from $V$. Note that if $t_j \in I_{j'}$ for some $j' \in U$, then
  $j$ directly influences $j'$ and hence influences $m$, but this
  contradicts $j \not\in U$. Also, clearly $t_j \ne t_m$ since $j \ne
  m$. Thus, $t_j \not\in V$.

  We see that none of the bits at positions $t_j$ or elements of $I_j$
  are flipped for $\bz'$, so by the same argument as in the previous
  case, we conclude that $g_j(\bz') = s_j$.
  \\\\
  This completes the induction, showing that $\bz'$ indeed ends in
  $\bw'$ and is $S$-compatible. Furthermore, observe that the set $V$
  depends only on $S$, and so we may symmetrically recover $\bz$ from
  $\bz'$ by the same transformation. Thus, this gives a bijection from
  $S$-compatible sequences ending in $\bw$ to those ending in $\bw'$,
  completing the proof.
\end{proof}

The next two lemmas describe how closely $g_k$ tracks $t_k$. To avoid
boundary issues, it is convenient to state them for infinite bit
sequences.

\begin{lemma} \label{lemma:geometric-increments}
  Let $\bX$ be an infinite sequence of i.i.d.\ uniform bits, and let
  $(\bt, \wt{\bX}) \sim \D^*_p(\bX)$. Define $d_k = t_k -
  g_k(\wt{\bX}, \bX)$.

  Then, $d_{k+1} - d_k$ is independent of $d_1, d_2, \ldots , d_k$ and
  has the same law as $\max(G_p - G_{1/2}, -d_k)$, where $G_p$ and
  $G_{1/2}$ are independent geometrics with parameters $p$ and
  $\frac{1}{2}$, respectively.
\end{lemma}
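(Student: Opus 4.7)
The plan is to apply Lemma \ref{lemma:conditional-independence}. First, I would condition on the values of $t_1, \ldots, t_k$ and $g_1, \ldots, g_k$ (writing $g_i$ for $g_i(\wt{\bX}, \bX)$), which determines $d_1, \ldots, d_k$. Under this conditioning, Lemma \ref{lemma:conditional-independence} says that $X_{g_k + 1}, X_{g_k + 2}, \ldots$ are i.i.d.\ uniform, and because the deletion pattern at positions beyond $t_k$ is independent of bit values, the increment $\Delta t := t_{k+1} - t_k$ is distributed as a geometric $G_p$ with parameter $p$, independent of these bits and of everything else in sight.

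Next, setting $\tau := t_{k+1} - g_k = d_k + \Delta t$, I would note that $\wt{X}_{k+1} = X_{t_{k+1}} = X_{g_k + \tau}$, and that $g_{k+1} - g_k$ equals the least $j \ge 1$ with $X_{g_k + j} = X_{g_k + \tau}$. Conditioning further on $\Delta t$ and on the value of $X_{g_k + \tau}$, the bits $X_{g_k + 1}, \ldots, X_{g_k + \tau - 1}$ are i.i.d.\ uniform and independent of $X_{g_k + \tau}$; a short computation gives
\[
\P(g_{k+1} - g_k = j \mid \tau) = \begin{cases} (1/2)^j & \text{if } 1 \le j < \tau, \\ (1/2)^{\tau - 1} & \text{if } j = \tau, \end{cases}
\]
which matches the law of $\min(G_{1/2}, \tau)$ for an independent geometric $G_{1/2}$ with parameter $1/2$. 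Combining these,
\[
d_{k+1} - d_k = \Delta t - (g_{k+1} - g_k) \stackrel{d}{=} \Delta t - \min(G_{1/2}, d_k + \Delta t) = \max(\Delta t - G_{1/2}, -d_k),
\]
and since $\Delta t \stackrel{d}{=} G_p$ is independent of $G_{1/2}$, this is the claimed law. Because this conditional distribution depends on $d_1, \ldots, d_k$ only through $d_k$, the asserted Markov-type independence follows.

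The main obstacle is setting up the conditioning carefully enough to invoke Lemma \ref{lemma:conditional-independence} while simultaneously keeping $\Delta t$ as an independent geometric; once these are in place, the remaining distributional calculation for the first-match index $g_{k+1} - g_k$ is a short exercise exploiting the symmetry that conditioning on $X_{g_k + \tau}$ leaves $X_{g_k + 1}, \ldots, X_{g_k + \tau - 1}$ i.i.d.\ uniform. The cap at $\tau$ inside the $\min$ reflects the forced match $X_{g_k + \tau} = X_{g_k + \tau}$ at position $\tau$, and correspondingly the $-d_k$ term in the $\max$ is precisely the constraint $g_{k+1} \le t_{k+1}$, i.e.\ $d_{k+1} \ge 0$.
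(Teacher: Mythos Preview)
Your proposal is correct and follows essentially the same approach as the paper: condition on $t_1,\ldots,t_k,g_1,\ldots,g_k$, invoke Lemma~\ref{lemma:conditional-independence} to make the bits beyond $g_k$ i.i.d.\ uniform, realize $t_{k+1}-t_k$ as an independent $G_p$, and then identify $g_{k+1}-g_k$ as $\min(G_{1/2},t_{k+1}-g_k)$ via the first-match computation. The only difference is cosmetic---you write out the pointwise probabilities for $g_{k+1}-g_k$ explicitly, whereas the paper states the $\min(G_{1/2},\cdot)$ identification verbally.
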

\begin{proof}
  For brevity, write $g_k = g_k(\wt{\bX}, \bX)$. We condition on $t_i$
  and $g_i$ for $1 \le i \le k$. By Lemma
  \ref{lemma:conditional-independence}, the bits $(X_i)_{i = g_k +
    1}^\infty$ are i.i.d.\ uniform even after this conditioning. Next,
  we sample $t_{k+1}$, which we may write as $t_{k+1} = t_k + G_p$
  since each bit is retained independently with probability $p$. We
  then examine the bits
  \[ X_{g_k + 1}, X_{g_k + 2}, \ldots , X_{t_{k + 1}}, \]
  which are still i.i.d.\ uniformly distributed. Recall that $g_{k+1}$
  is defined to be the earliest position of these bits where the value
  matches $\wt{X}_{k+1} = X_{t_{k+1}}$. Each of the above bits has a
  $\frac{1}{2}$ chance of being a match except for the last one, which
  is guaranteed to match.

  Thus, $g_{k+1}$ may be written as $\min(g_k + G_{1/2},
  t_{k+1})$. Consequently,
  \[ d_{k+1} - d_k = (t_{k + 1} - t_k) - (g_{k + 1} - g_k) = G_p - \min(G_{1/2}, t_{k+1} - g_k) \]
  \[ = \max(G_p - G_{1/2}, G_p + g_k - t_{k+1}) = \max(G_p - G_{1/2}, -d_k), \]
  as desired.
\end{proof}

\begin{lemma} \label{lemma:d_k-deviation}
  Suppose $p > \frac{1}{2}$. Let $\bX$ be an infinite sequence of
  i.i.d.\ uniform bits, and let $(\bt, \wt{\bX}) \sim
  \D^*_p(\bx)$. Define $d_k = t_k - g_k(\wt{\bX}, \bX)$.

  Then, there exist positive constants $c_p$ and $C_p$ depending only
  on $p$ such that for each $k$, we have
  \[ \P\( d_k \ge \lambda \) \le C_pe^{-c_p \lambda}. \]
\end{lemma}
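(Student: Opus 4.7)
The plan is to reinterpret the sequence $(d_k)$ as a reflected random walk and then apply a standard Chernoff/union-bound argument. From Lemma \ref{lemma:geometric-increments}, since $d_{k+1} - d_k = \max(G_p - G_{1/2}, -d_k)$ with the geometrics independent of the past, we may rewrite the recursion in Lindley form:
\[ d_{k+1} = \max(d_k + Y_{k+1}, 0), \qquad Y_{k+1} := G_p - G_{1/2}, \]
with $Y_1, Y_2, \ldots$ i.i.d.\ and $d_0 = 0$. A quick induction on $k$ then gives
\[ d_k = \max_{0 \le j \le k}\(S_k - S_{k-j}\), \qquad S_j := Y_1 + Y_2 + \cdots + Y_j. \]
Because $(Y_1, \ldots, Y_k)$ is exchangeable under reversal, the right side has the same law as $M_k := \max_{0 \le j \le k} S_j$, so it suffices to bound the tail of $M_k$.

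Next, I would observe that the random walk $(S_j)$ has negative drift: $\E[Y_1] = \frac{1}{p} - 2 < 0$ since $p > \frac{1}{2}$. Both $G_p$ and $G_{1/2}$ have finite exponential moments in a neighborhood of $0$ (the geometric MGF $\frac{p e^\theta}{1 - q e^\theta}$ is finite for $\theta < \log(1/q)$, and similarly for $G_{1/2}$ with $-\theta < \log 2$), so $\varphi(\theta) := \E[e^{\theta Y_1}]$ is smooth near $0$ with $\varphi(0) = 1$ and $\varphi'(0) = \frac{1}{p} - 2 < 0$. Hence I can pick a sufficiently small $\theta_p > 0$ depending only on $p$ so that $\mu := \varphi(\theta_p) < 1$.

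The final step is a routine union bound. By Markov's inequality applied to $e^{\theta_p S_j}$, we have $\P(S_j \ge \lambda) \le e^{-\theta_p \lambda} \mu^j$, and therefore
\[ \P(d_k \ge \lambda) \le \sum_{j=0}^{\infty} \P(S_j \ge \lambda) \le \frac{e^{-\theta_p \lambda}}{1 - \mu}, \]
yielding the claim with $c_p = \theta_p$ and $C_p = (1-\mu)^{-1}$.

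There is no real obstacle in this argument; the only step requiring minor care is the induction that converts the Lindley recursion into a running maximum of a free random walk, and the verification that $\varphi$ is finite in an open neighborhood of $0$ so that a valid Chernoff parameter $\theta_p$ exists. Everything else follows from standard large-deviation bounds together with the fact, crucial here, that $p > \frac{1}{2}$ makes the underlying increments negatively biased.
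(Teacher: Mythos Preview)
Your argument is correct and takes a genuinely different route from the paper's. The paper establishes a uniform bound on $\E[\alpha^{d_k}]$ directly, via an inductive Lyapunov-type estimate: it chooses $\alpha>1$ with $f(\alpha)=\frac{p\alpha}{1-(1-p)\alpha}\cdot\frac{1}{2\alpha-1}<1$, splits according to whether $d_k$ is large or small, and shows $\E[\alpha^{d_{k+1}}]\le \sqrt{\gamma}\,\E[\alpha^{d_k}]+O(1)$, then applies Markov. You instead recognize the recursion as a Lindley equation, pass to the running maximum of the free walk $S_j=\sum_{i\le j}Y_i$ via time reversal, and union-bound $\P(S_j\ge\lambda)$ using the Cram\'er parameter $\theta_p$. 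The two approaches are secretly computing the same object---your $\varphi(\theta)$ is exactly the paper's $f(e^\theta)$---but your route avoids the case split at the reflecting boundary and gives the constants $C_p=(1-\mu)^{-1}$, $c_p=\theta_p$ in one line. The only point worth a sentence of extra care is that Lemma~\ref{lemma:geometric-increments} specifies the conditional law of the increment given $d_1,\ldots,d_k$, so strictly speaking you are asserting that $(d_k)$ has the \emph{same law} as the Lindley chain driven by i.i.d.\ $Y_k$'s (an immediate induction on the one-step conditionals); since you only need the marginal of $d_k$, this is enough.
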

\begin{proof}
  Let $G_p$ and $G_{1/2}$ be independent geometrics with parameters
  $p$ and $\frac{1}{2}$, as in Lemma
  \ref{lemma:geometric-increments}. Consider the function $f(x) =
  \frac{px}{1 - (1-p)x} \cdot \frac{1}{2x - 1}$, which satisfies $f(1)
  = 1$ and $f'(1) = \frac{1}{p} - 2 < 0$. Thus, we may take $\alpha
  \in (1, 1/p)$ to be a constant so that $f(\alpha) < 1$. We will show
  that $\E(\alpha^{d_k})$ is bounded above uniformly in $k$, from
  which the result immediately follows by Markov's inequality.

  We proceed by induction. For the base case, note that $t_1$ has the
  distribution of $G_p$, and we chose $\alpha < 1/p$, so
  $\E(\alpha^{G_p})$ is finite. Since $d_1 \le t_1$, it follows that
  $\E(\alpha^{d_1})$ is also finite.

  For the inductive step, define $\gamma = f(\alpha) < 1$, and let $M$
  be a large enough integer so that
  \[ 1 + 2 (\alpha - 1) \(\frac{1}{2\alpha}\)^M \le \frac{1}{\sqrt{\gamma}}. \]
  Note that we have the formulas
  \begin{align*}
    \E\( \alpha^{G_p} \) &= p \sum_{k = 1}^\infty (1 - p)^{k - 1} \alpha^k = \frac{p\alpha}{1 - (1 - p)\alpha} \\
    &= \\
    \E\( \alpha^{-\min(G_{1/2}, M)} \) &= 2^{-M} \alpha^{-M} + \sum_{k = 1}^M 2^{-k} \alpha^{-k} = \(\frac{1}{2\alpha}\)^M + \frac{1}{2\alpha} \cdot \frac{1 - \(\frac{1}{2\alpha}\)^M}{1 - \frac{1}{2\alpha}} \\
    &= \frac{1 + 2(\alpha - 1)\(\frac{1}{2\alpha}\)^M}{2\alpha - 1} \le \frac{1}{\sqrt{\gamma}(2\alpha - 1)}.
  \end{align*}
  These calculations allow us to bound two conditional expectations,
  depending on whether $d_k \ge M$. By Lemma
  \ref{lemma:geometric-increments}, we have
  \begin{align*}
    \E\( \alpha^{d_{k+1}} \mid d_k \ge M \) &\le \E\(\alpha^{\max(G_p - G_{1/2}, -M)}\) \E\( \alpha^{d_k} \mid d_k \ge M \) \\
    &\le \E\( \alpha^{G_p} \) \E\( \alpha^{-\min(G_{1/2}, M)} \) \E\( \alpha^{d_k} \mid d_k \ge M \) \\
    &\le \frac{p\alpha}{1 - (1 - p)\alpha} \cdot \frac{1}{\sqrt{\gamma}(2\alpha - 1)} \cdot \E\( \alpha^{d_k} \mid d_k \ge M \) \\
    &\le \sqrt{\gamma} \cdot \E\( \alpha^{d_k} \mid d_k \ge M \) \\
    \E\( \alpha^{d_{k+1}} \mid d_k < M \) &\le \E\(\alpha^{M + G_p}\) = \frac{p\alpha^{M + 1}}{1 - (1 - p)\alpha} \le 8\alpha^M.
  \end{align*}
  Together, these imply that
  \[ \E\( \alpha^{d_{k+1}} \) \le \sqrt{\gamma} \cdot \E\( \alpha^{d_k} \) + 8\alpha^M. \]
  Recall that $\alpha$, $\gamma$, and $M$ are all constants that
  depend only on $p$, and $\gamma < 1$. Hence, $\E\( \alpha^{d_k} \)$
  is bounded above uniformly in $k$, completing the proof.
\end{proof}

We are finally ready to prove Lemma \ref{lemma:trackability}.

\begin{proof}[Proof of Lemma \ref{lemma:trackability}]
  For a given string $\bz$ of $n$ bits and $(\bt_\bz, \wt{\bz}) \sim
  \D^*_p(\bz)$, write
  \[ d(\bz) = \max_{1 \le k \le |\bt_\bz|} \( t_{\bz,k} - g_k(\wt{\bz}, \bz) \) \qquad\text{and}\qquad r_\lambda(\bz) = \P_\bz\( d(\bz) \ge \lambda \). \]
  We apply Lemma \ref{lemma:d_k-deviation} to the sequence $\bX$,
  where we may think of $\bX$ as the first $n$ bits of an infinite
  sequence of i.i.d.\ uniform bits. Union bounding over all indices $1
  \le k \le n$, we have
  \[ \E[r_\lambda(\bX)] \le n \cdot C_{1,p} \cdot e^{-c_{1,p} \lambda}, \]
  where $C_{1,p}$ and $c_{1,p}$ are constants depending only on
  $p$. Consequently,
  \begin{equation} \label{eq:r_lambda-bound}
    \P\( r_\lambda(\bX) \ge e^{-c_{1,p}\lambda/2} \) \le n \cdot C_{1,p} \cdot e^{-c_{1,p}\lambda/2}.
  \end{equation}
  Define the event
  \[ E = \bigcap_{\lambda = 2\ceil{\log n}}^\infty \Big\{ r_{2\lambda/c_{1,p}}(\bX) \le e^{-\lambda} \Big\}. \]
  Then, a union bound using \eqref{eq:r_lambda-bound} gives
  \begin{equation} \label{eq:E-bound}
    \P(E) \ge 1 - n \cdot C_{1,p} \sum_{\lambda = 2\ceil{\log n}}^\infty e^{-\lambda} \ge 1 - \frac{C_{2,p}}{n},
  \end{equation}
  where $C_{2,p}$ is another constant depending only on $p$.

  Meanwhile, on the event $E$, consider any $t > \frac{2}{c_{1,p}}(2\ceil{\log n}
  + 1)$. Let $t' = \left\lfloor \frac{c_{1,p} t}{2}
  \right\rfloor$. Since $t' \ge 2\ceil{\log n}$, we have
  \begin{align}
    \P\( d(\bX) \ge t \) &\le \P\( d(\bX) \ge \frac{2 t'}{c_{1,p}} \) = r_{2t'/c_{1,p}}(\bX) \le e^{-t'} \nonumber \\
    &\le e^{- \frac{c_{1,p} t}{2} + 1}. \label{eq:d(X)-bound}
  \end{align}
  Combining \eqref{eq:E-bound} and \eqref{eq:d(X)-bound}, we conclude
  that
  \[ \P\( \text{$\bX$ is $(C_p \log n, C_p)$-trackable} \) \ge 1 - \frac{C_p}{n} \]
  for a sufficiently large constant $C_p$.
\end{proof}

\section{Alignment by seeing a particular sequence} \label{sec:align-seq}

In this section, we develop the tools for our second alignment
strategy based on looking for a particular sequence of consecutive
bits. The strategy follows the same main idea as the use of
``anchors'' in \cite{HMPW08}. However, our analysis is more
precise. We first establish some terminology and notation.

\begin{definition}
  For any two bit strings $\bw$ and $\by$, we say that \bemph{$\bw$
    occurs in $\by$} if there is some index $j$ such that $y_{j + i -
    1} = w_i$ for $i = 1, 2, \ldots , |\bw|$. We use the following
  notation to describe occurrences:

  \begin{itemize}
  \item $\ind_\bw(\by)$ denotes the first index at which $\bw$ occurs
    in $\by$ (i.e. the smallest possible $j$ as above), or $\infty$ if
    $\bw$ does not occur in $\by$.
  \item Whenever $\ind_\bw(\by) < \infty$,
    \[ \indset_\bw(\by) := \{ j : \ind_\bw(\by) \le j < \ind_\bw(\bw) + |\bw| \} \]
    denotes the set of all the indices in $\by$ corresponding to the
    occurrence of $\bw$ in $\by$.
  \end{itemize}

  In later sections, we will be interested in occurrences of $\bw$
  within a particular substring $\by^{i:j}$ of $\by$. However, we
  still want to work with indices based on position in $\by$ rather
  than in $\by^{i:j}$. In these cases, we use the notation

  \begin{itemize}
  \item $\ind^{i:j}_\bw(\by) := \ind_\bw(\by^{i:j}) + i - 1$.
  \item $\indset^{i:j}_\bw(\by) := \{ k : \ind^{i:j}_\bw(\by) \le k < \ind^{i:j}_\bw(\bw) + |\bw| \}$.
  \end{itemize}
\end{definition}

Suppose that $\bx$ is a string of length $2n$, and $\bw =
\bx^{(n-a+1):(n+a)}$ is a substring in the middle of $\bx$. Now,
suppose we observe a trace $\wt{\bx} \sim \D_p(\bx)$, and we see that
$\bw$ occurs in $\wt{\bx}$. We would like to say that in this case the
bits in $\wt{\bx}$ corresponding to the occurrence of $\bw$ likely
came from the occurrence of $\bw$ in $\bx$ (or at least, some of them
did). Not all strings $\bx$ have this property, but as we will see
shortly, it turns out that typical ones do. We formalize the property
in the following definition.

\begin{definition} \label{def:distinguishable-center}
  Suppose $p > \frac{1}{2}$, let $\bx \in \{0,1\}^{2n}$, and take
  $(\bt, \wt{\bx}) \sim \D^*_p(\bx)$. Consider a positive integer $a
  \le n$ and positive real $\gamma < 1$, and write $\bw =
  \bx^{(n-a+1):(n+a)}$. We say that $\bx$ is \bemph{$(a,
    \gamma)$-distinguishable} if
  \[ \P_\bx\Big( \ind_\bw(\wt{\bx}) < \infty \quad\text{and}\quad \bt(\indset_\bw(\wt{\bx})) \cap [n-a, n+a] = \emptyset \Big) \le \gamma^a \cdot p^{2a}. \]
\end{definition}
\begin{remark}
  It is always possible for $\bw$ to occur in $\wt{\bx}$ if each of
  the positions $n-a+1$ through $n+a$ in $\bx$ are retained. This
  happens with probability $p^{2a}$. The bound on the probability in
  the above definition is given in the form $\gamma^a \cdot p^{2a}$ to
  highlight that it should be smaller than $p^{2a}$ by a factor that
  is exponential in $a$.
\end{remark}

The main result of this section is that random sequences are likely to
be distinguishable.

\begin{lemma} \label{lemma:distinguish-centers}
  Suppose $p > \frac{1}{2}$, and suppose $\bX \in \{0,1\}^{2n}$ is
  chosen uniformly at random. Then, there exist $\gamma_p < 1$ and
  $c_p > 0$ depending only on $p$ such that
  \[ \P\( \text{$\bX$ is $(\ceil{n^{1/2}}, \gamma_p)$-distinguishable} \) \ge 1 - e^{-c_p n^{1/2}}. \]
\end{lemma}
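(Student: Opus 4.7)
My plan is to show that for a uniformly random $\bX$, the deletion-probability $P(\bx)$ appearing inside Definition~\ref{def:distinguishable-center} has very small expectation, and then apply Markov's inequality. Set $a = \ceil{n^{1/2}}$ and $\bw = \bX^{(n-a+1):(n+a)}$, which is itself a uniformly random string of length $2a$, independent of the remaining bits of $\bX$.

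Let $N$ be the number of strictly increasing tuples $(s_1, \ldots, s_{2a})$ contained in $\{1, \ldots, 2n\} \setminus [n-a, n+a]$ such that (i) each $s_i$ is retained by the channel while every position in $(s_1, s_{2a})$ outside $\{s_1, \ldots, s_{2a}\}$ is deleted, and (ii) $X_{s_i} = w_i$ for $i = 1, \ldots, 2a$. If the bad event of Definition~\ref{def:distinguishable-center} occurs, then taking $j = \ind_\bw(\wt{\bX})$, the tuple $(t_j, \ldots, t_{j+2a-1})$ lies outside $[n-a, n+a]$ and satisfies (i)--(ii); hence $P(\bX) \le \P_\bX(N \ge 1) \le \E_\bX N$, and averaging over $\bX$ yields $\E\, P(\bX) \le \E N$.

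For a fixed tuple, event (i) depends only on the retention pattern and has probability $p^{2a} q^{s_{2a} - s_1 - (2a-1)}$, while the bits $X_{s_1}, \ldots, X_{s_{2a}}$ avoid $[n-a+1, n+a]$ and are thus independent of $\bw$, so (ii) has probability $2^{-2a}$. Writing $L = s_{2a} - s_1 - (2a-1) \ge 0$ and noting there are $\binom{L + 2a - 2}{2a - 2}$ choices for the intermediate $s_i$, the negative-binomial identity $\sum_{L \ge 0}\binom{L + 2a - 2}{2a-2} q^L = p^{-(2a-1)}$ gives, after dropping the ``avoid middle'' constraint,
\[ \E N \le 2^{-2a} p^{2a} \sum_{s_1 = 1}^{2n}\sum_{L \ge 0} \binom{L + 2a - 2}{2a - 2} q^L \le 2np \cdot 2^{-2a} = 2np \cdot p^{2a} \cdot (4p^2)^{-a}. \]

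Because $p > \frac{1}{2}$ we have $4p^2 > 1$, so we may pick $\gamma_p \in \bigl(1/(4p^2), 1\bigr)$ and set $\delta := 4p^2 \gamma_p > 1$. Markov's inequality then yields
\[ \P\bigl[ P(\bX) > \gamma_p^a p^{2a} \bigr] \le \frac{\E N}{\gamma_p^a p^{2a}} \le \frac{2np}{\delta^a} \le e^{-c_p n^{1/2}} \]
for a suitable $c_p > 0$ and $n$ large enough (depending on $p$), using $a = \ceil{n^{1/2}}$. The one subtlety worth flagging is that bad occurrences can straddle the middle block, but all such cases are absorbed at once by the unrestricted sum over $s_1, \ldots, s_{2a} \in [1, 2n]$; the geometric decay in $s_{2a} - s_1$ coming from the deleted positions in the gap keeps the series finite and independent of how the tuple is split across the middle.
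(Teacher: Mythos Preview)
Your proof is correct and follows the same skeleton as the paper's: show $\E\bigl[P(\bX)\bigr] \le O(n)\cdot 2^{-2a}$ and then apply Markov's inequality with $\gamma_p$ chosen so that $(2p)\gamma_p > 1$ (the paper picks $\gamma_p = (2p)^{-1/2}$). The only difference is in how the first moment is computed. The paper conditions on $\bt$ and on the middle block $\bw$, notes that the remaining trace bits $(\wt X_j)_{j\in J}$ are still i.i.d.\ uniform, and union-bounds over the at most $2n$ possible starting positions for $\bw$ inside that subsequence to get $\P(E)\le 2n\cdot 2^{-2a}$ directly. You instead enumerate source templates $(s_1,\dots,s_{2a})$ explicitly, separate the deletion-pattern factor $p^{2a}q^{L}$ from the bit-matching factor $2^{-2a}$, and collapse the sum over gaps with the negative-binomial identity $\sum_{L\ge 0}\binom{L+2a-2}{2a-2}q^{L}=p^{-(2a-1)}$, arriving at the slightly sharper $2np\cdot 2^{-2a}$. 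Your route makes the cancellation of $p^{2a}$ against the geometric gap-sum explicit; the paper's conditioning avoids any combinatorics. Either way the argument is complete, and your handling of tuples that straddle the middle block (by relaxing to an unrestricted sum) is sound.
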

\begin{proof}
  Let $a = \ceil{n^{1/2}}$, let $\bw = \bX^{(n-a+1):(n+a)}$, and take
  $(\bt, \wt{\bX}) \sim \D^*_p(\bX)$. Let
  \[ J = \bt^{-1}\Big([1, 2n] \setminus [n-a, n+a]\Big) = \Big\{ j : 1 \le j \le |\wt{\bX}|, \quad t_j \not\in [n-a, n+a] \Big\} \]
  denote the set of indices of $\wt{\bX}$ which did not come from the
  middle $2a$ positions of $\bX$. Define the event
  \[ E = \Big\{ \ind_\bw(\wt{\bX}) < \infty \quad\text{and}\quad \bt(\indset_\bw(\wt{\bX})) \subseteq J \Big\}, \]
  which is the relevant event for $(a, \gamma)$-distinguishability.

  Let us condition on the middle $2a$ bits of $\bX$ (i.e. the bits
  that form $\bw$) as well as on $\bt$. The key observation is that
  $(\wt{X}_j)_{j \in J}$ are still i.i.d.\ uniform after our
  conditioning. Now, if $\bw$ occurs in $\wt{\bX}$, but
  $\bt(\indset_\bw(\wt{\bX})) \subseteq J$, then it means that $\bw$
  occurs in the sequence $(\wt{X}_j)_{\text{$j \in J$}}$. However,
  since the $(\wt{X}_j)_{j\in J}$ are i.i.d., in each possible
  position this only happens with probability $2^{-|\bw|} =
  2^{-2a}$. Union bounding over at most $2n$ positions yields
  \[ \P(E) \le 2n \cdot 2^{-2a}, \]
  where we have also taken the expectation over our initial
  conditioning on the middle $2a$ bits and $\bt$.

  The above probability is with respect to simultaneously two sources
  of randomness: the random choice of $\bX$ and the random choice of
  the deletions. To highlight this, recall the notation $\P_\bx$ for
  the probability over the randomness of the deletion channel for a
  given input string $\bx$.

  Take $\gamma_p = (2p)^{-1/2} < 1$. By Markov's inequality,
  \begin{align*}
    \P(\P_\bX(E) \ge \gamma_p^a \cdot p^{2a}) &\le \gamma_p^{-a} \cdot p^{-2a} \cdot \E(\P_\bX(E)) = \gamma_p^{-a} \cdot p^{-2a} \cdot \P(E) \\
     &\le 2n \cdot \gamma_p^{3a} = e^{-\Omega_p(n^{1/2})},
  \end{align*}
  which yields $(a, \gamma_p)$-distinguishability with the desired probability.
\end{proof}

We conclude the section by establishing a consequence of
$(\ceil{n^{1/2}}, \gamma_p)$-distinguishability that is more
convenient to work with than Definition
\ref{def:distinguishable-center}.

\begin{lemma} \label{lemma:distinguishability-conseqeunce}
  Suppose $p > \frac{1}{2}$, let $a = \ceil{n^{1/2}}$, and suppose
  $\bx \in \{0,1\}^{2n}$ is $(a, \gamma_p)$-distinguishable for some
  constant $\gamma_p < 1$ depending only on $p$. Consider $(\bt, \wt{\bx})
  \sim \D^*_p(\bx)$. Then,
  \[ \P_\bx\Big( \ind_\bw(\wt{\bx}) < \infty \quad\text{and}\quad \bt(\indset_\bw(\wt{\bx})) \not\subseteq [n-10a, n+10a] \Big) \le e^{-\Omega_p(a)} \cdot p^{2a}. \]
\end{lemma}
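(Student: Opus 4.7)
The plan is to split the event in the lemma's conclusion into two cases and handle each separately. Let $B$ denote the event $\{\ind_\bw(\wt{\bx}) < \infty \text{ and } \bt(\indset_\bw(\wt{\bx})) \not\subseteq [n-10a, n+10a]\}$, and write $B = B_1 \cup B_2$ where $B_1 = B \cap \{\bt(\indset_\bw(\wt{\bx})) \cap [n-a, n+a] = \emptyset\}$ and $B_2 = B \setminus B_1$.

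The first case is immediate: $B_1$ is exactly the distinguishability event from Definition~\ref{def:distinguishable-center}, so $\P_\bx(B_1) \le \gamma_p^a p^{2a} = e^{-\Omega_p(a)} p^{2a}$ by the hypothesis that $\bx$ is $(a, \gamma_p)$-distinguishable (here $\gamma_p < 1$ depends only on $p$).

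For the second case, I observe that on $B_2$ the set $T := \bt(\indset_\bw(\wt{\bx}))$ consists of $2a$ consecutive retained positions in $\bx$, contains an element in $[n-a, n+a]$, and contains an element outside $[n-10a, n+10a]$. Its span must therefore be at least $9a$, and since the only retained positions within this span are the $2a$ elements of $T$ themselves, the span must contain at least $7a + 1$ deleted positions. I would then bound $\P_\bx(B_2)$ via a union bound over candidate configurations $S = (s_1 < s_2 < \ldots < s_{2a})$ of possible retained positions forming such a $T$; each such configuration contributes at most $p^{2a} q^{s_{2a} - s_1 - 2a + 1} \le p^{2a} q^{7a+1}$ weighted by $\mathbf{1}[\bx_{s_i} = w_i \text{ for all } i]$.

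The hard part will be controlling the combinatorial sum, i.e., counting the number of subsequences of $\bx$ matching $\bw$ that have span at least $9a$ and touch $[n-a, n+a]$. A naive enumeration via $\binom{D-1}{2a-2}$ intermediate positions yields an overhead of $e^{O(a \log a)}$ that dominates the $q^{\Omega(a)}$ savings from the deletion cost when $q$ is close to $1/2$. To overcome this, I would revisit the proof of Lemma~\ref{lemma:distinguish-centers}: although stated as a probability bound, that argument actually produces a bound on the expected number of occurrences of $\bw$ in the non-middle i.i.d.\ subsequence $(\wt{X}_j)_{j \in J}$ (via a union bound over starting positions there). The same style of argument, applied with the excluded window enlarged to $[n-10a, n+10a]$ and combined with the $q^{7a+1}$ deletion penalty for each long-span configuration, should yield the needed control and give $\P_\bx(B_2) \le e^{-\Omega_p(a)} p^{2a}$.
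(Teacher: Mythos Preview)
Your decomposition $B = B_1 \cup B_2$ and your bound on $B_1$ via the distinguishability hypothesis are exactly what the paper does. You also correctly isolate the key geometric fact for $B_2$: since $T = \bt(\indset_\bw(\wt{\bx}))$ consists of $2a$ \emph{consecutive} retained positions, if it meets $[n-a,n+a]$ but escapes $[n-10a,n+10a]$ then some window of $9a$ consecutive positions of $\bx$ suffers at least $7a$ deletions.

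Where the proposal goes wrong is in what you do next. You try to enumerate configurations $S=(s_1,\ldots,s_{2a})$ and, finding the combinatorics intractable, propose to salvage it by reusing the argument of Lemma~\ref{lemma:distinguish-centers}. But that argument is an \emph{averaging} argument over a uniformly random $\bX$ (it relies on the bits $(\wt X_j)_{j\in J}$ being i.i.d.\ uniform); it yields no control for the fixed string $\bx$ assumed in the present lemma. So your proposed workaround does not close the gap.

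The fix is far simpler than any enumeration, and your own observation already contains it. You have shown
\[
B_2 \subseteq E_2 := \{\text{some block of $9a$ consecutive positions in $\bx$ has at least $7a$ deletions}\},
\]
and $E_2$ depends only on the i.i.d.\ Bernoulli$(q)$ deletion pattern, not on $\bx$ or $\bw$. A union bound over the at most $2n$ blocks together with Hoeffding (using $q<\tfrac12$) gives
\[
\P_\bx(E_2)\ \le\ 2n\cdot \P\big(\Binom(9a,\tfrac12)\ge 7a\big)\ \le\ 2n\cdot e^{-25a/18}\ \le\ 2n\cdot 4^{-a}.
\]
Since $p>\tfrac12$ we have $4^{-a}=(4p^2)^{-a}\,p^{2a}$ with $4p^2>1$, and $2n\le 2a^2$, so $2n\cdot 4^{-a}=e^{-\Omega_p(a)}p^{2a}$. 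This is precisely the paper's argument: no configuration counting is needed, and the factor $p^{2a}$ falls out of the large-deviation bound rather than from explicitly tracking the retained positions of $\bw$.
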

\begin{proof}
  The main idea is that if the set $\bt(\indset_\bw(\wt{\bx}))$
  intersects the interval $[n-a, n+a]$, then it is unlikely to stretch
  out very far from that interval.

  Let
  \[ E_1 = \Big\{ \ind_\bw(\wt{\bx}) < \infty \quad\text{and}\quad \bt(\indset_\bw(\wt{\bx})) \cap [n-a, n+a] = \emptyset \Big\}, \]
  so that $(a, \gamma_p)$-distinguishability ensures $\P_\bx(E_1) \le
  \gamma_p^a p^{2a}$.

  Next, let
  \[ E_2 = \left\{ \begin{tabular}{c} \text{more than $7a$ deletions occurred among} \\ \text{some $9a$ consecutive positions in $\bx$} \end{tabular} \right\}. \]
  By a standard Chernoff bound (see, e.g., \cite{hoeffding}) and union
  bounding over all blocks of $9a$ bits in $\bx$, we have
  \begin{align*}
    \P_\bx(E_2) &\le 2n \cdot \P\( \Binom(9a, 1/2) > 7a \) \le 2n \cdot e^{- \frac{25a^2}{18a}} \\
    &\le 2n \cdot 4^{-a} \le e^{-\Omega_p(a)} \cdot p^{2a}.
  \end{align*}

  Finally, let
  \[ E_3 = \Big\{ \ind_\bw(\wt{\bx}) < \infty \;\text{and}\; \bt(\indset_\bw(\wt{\bx})) \not\subseteq [n-10a, n+10a] \Big\}, \]
  which is the event of interest for the lemma. Suppose now that $E_1$
  holds but not $E_3$, i.e. $\bt(\indset_\bw(\wt{\bx}))$ is not disjoint
  from $[n - a, n + a]$ but is also not contained within $[n - 10a, n
    + 10a]$. Then $\bt(\indset_\bw(\wt{\bx}))$ must have two elements which
  are at least $9a$ apart, so that $E_2$ holds. Thus, we find that
  \[ \P_\bx(E_3) \le \P_\bx(E_1) + \P_\bx(E_2) \le e^{-\Omega_p(a)} \cdot p^{2a}. \]
\end{proof}

\section{Reconstruction from approximate alignment} \label{sec:shifted-reconstruction}

In this section, we adapt the trace reconstruction methods of
\cite{NP16} and \cite{DOS16} to a setting where the input string also
undergoes a random shift. The main result of this section is the
following lemma.

\begin{lemma} \label{lemma:distinguishing-with-shifts}
  Let $k$, $n$, and $N$ be positive integers with $k < n < N$. Let
  $\bx, \bx' \in \{0,1\}^N$ be two strings whose first $k$ digits are
  identical but whose first $n$ digits are not. Let $S$ be a random
  variable taking integer values between $0$ and $k - 1$.

  Suppose the following conditions are satisfied:
  \[ \E[ |S - \E S| ] \le n^{1/3}, \qquad k \le n^{2/3}. \]
  Then, for some constant $C_p$ depending only $p$, there exists an
  index $j \le C_p n$ such that if $\wt{\bx} \sim \D_p(\bx^{(S+1):})$ and
  $\wt{\bx}' \sim \D_p((\bx')^{(S+1):})$, then
  \[ \left| \P_\bx(\wt{x}_j = 1) - \P_{\bx'}(\wt{x}'_j = 1) \right| \ge \exp\( -C_p n^{1/3} \). \]
\end{lemma}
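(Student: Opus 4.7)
The plan is to adapt the generating-function approach of \cite{NP16,DOS16} to accommodate the random shift $S$. Let $f_j(\by) := \P_\by(\wt y_j = 1)$, and set $D(j) := \P_\bx(\wt x_j = 1) - \P_{\bx'}(\wt x'_j = 1) = \E_S[f_j(\bx^{(S+1):}) - f_j(\bx'^{(S+1):})]$. Starting from the elementary identity $\sum_{j \ge 1} f_j(\by) z^{j-1} = p \sum_i y_i (q + pz)^{i-1}$ (which is implicit in both \cite{NP16,DOS16} and follows by summing the geometric series after writing $f_j(\by) = \sum_i y_i \binom{i-1}{j-1} p^j q^{i-j}$), the hypothesis $S \le k - 1$ combined with the fact that $\delta_i := x_i - x'_i \in \{-1,0,1\}$ vanishes for $i \le k$ yields, after a change of summation index,
\[ \Psi(z) := \sum_{j \ge 1} D(j) z^{j-1} = p \cdot \E[u^{-S}] \cdot R(u), \]
where $u := q + pz$ and $R(u) := \sum_{i > k} \delta_i u^{i-1}$. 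Since $\bx$ and $\bx'$ first disagree at some index $d$ with $k < d \le n$, I factor $R(u) = u^{d-1} \tilde R(u)$ with $\tilde R$ a nonzero polynomial with coefficients in $\{-1,0,1\}$ and nonzero constant term $\delta_d$.

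I would then lower-bound $|\Psi|$ at some point $z^*$ on the short arc $\{e^{i\theta} : |\theta| \le c_p n^{-2/3}\}$ for small $c_p$ by controlling the three factors separately. On this arc, $|u - 1| = O_p(n^{-2/3})$ and $|u|^2 \ge 1 - O_p(n^{-4/3})$, so $|u^{d-1}| \ge \exp(-O_p(n^{-1/3})) \ge 1/2$. Writing $S = \lfloor \E S \rfloor + T$, the elementary estimate $|u^{-t} - 1| \le C_p |t|\, n^{-2/3}$ uniform in $|t| \le k \le n^{2/3}$, combined with $\E[|T|] \le n^{1/3}+1$, gives $|\E[u^{-T}] - 1| = O_p(n^{-1/3})$; since $|u| \le 1$ implies $|u|^{-\lfloor \E S\rfloor} \ge 1$, this forces $|\E[u^{-S}]| \ge 1/2$ for $n$ large and $c_p$ small. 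For $\tilde R$, I would apply a preliminary truncation replacing $\bx, \bx'$ by their first $\lceil C_p n \rceil$ bits, which perturbs each $D(j)$ with $j \le C_p n$ by at most $\exp(-\Omega_p(n))$ (a bit at position $i > C_p n$ contributes to $f_j$ with weight $\binom{i-1}{j-1} p^j q^{i-j}$, which is $\exp(-\Omega_p(i))$ for $j \le C_p n$ by standard Chernoff bounds on binomial tails). The truncated $\tilde R$ has degree $O_p(n)$, so the Littlewood-polynomial lower bound from \cite{NP16,DOS16}, applied to the image of the arc under $z \mapsto q + pz$ (an arc of length $\Theta_p(n^{-2/3})$ near $u = 1$ on the circle $\{q+pe^{i\theta}\}$), yields $\max_u |\tilde R(u)| \ge \exp(-C_p n^{1/3})$. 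Multiplying the three bounds produces the desired $z^*$ with $|\Psi(z^*)| \ge \exp(-C'_p n^{1/3})$.

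After the truncation, $\Psi$ is a polynomial in $z$ of degree $< C_p n$, so $|\Psi(z^*)| \le C_p n \cdot \max_{j \le C_p n} |D_{\text{trunc}}(j)|$, giving $\max_{j \le C_p n} |D_{\text{trunc}}(j)| \ge \exp(-C''_p n^{1/3})$; folding in the negligible truncation error yields the same bound (with slightly worse constant) for the untruncated $D(j)$, which is what the lemma asserts. The main obstacle I anticipate is verifying the Borwein--Erd\'elyi-type lower bound of \cite{NP16,DOS16} on the specific arc of $\{q+pe^{i\theta}\}$ near $u = 1$ rather than directly on the unit circle, and controlling the new factor $\E[u^{-S}]$ given only the weak first-moment hypothesis $\E[|S - \E S|] \le n^{1/3}$; the rest of the argument is bookkeeping of the three estimates on the chosen arc.
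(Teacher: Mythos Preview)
Your overall strategy is the paper's: the same generating-function identity (Lemma~\ref{lemma:poly-identity}), the same factorization into a shift factor $\E[u^{-S}]$ and a Littlewood polynomial, and Borwein--Erd\'elyi combined with the first-moment hypothesis on $S$. However, there is a quantitative error in the arc width. Lemma~\ref{lemma:polynomial-bound} yields $\max |\tilde R| \ge e^{-CL}$ on an arc of half-width $1/L$, so your choice $|\theta| \le c_p n^{-2/3}$ gives $L$ of order $n^{2/3}$ and hence only $|\tilde R| \ge \exp(-C n^{2/3})$, not the claimed $\exp(-C_p n^{1/3})$. The arc must have half-width of order $n^{-1/3}$, as the paper takes (specifically $p/(10L)$ with $L = n^{1/3}$). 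On this wider arc your bound $|u|^{d-1} \ge 1/2$ weakens to $\exp(-O_p(n^{1/3}))$, which is harmless; but your shift-factor estimate becomes borderline, since $\E|T|\cdot|u-1|$ is now $n^{1/3}\cdot \Theta(n^{-1/3}) = \Theta(1)$. Getting $|\E[u^{-S}]|\ge 1/2$ then genuinely depends on the constant in the arc width being small relative to $p$; the paper arranges this via the factor $p/10$ together with a derivative bound on $z\mapsto z^{-\E S}\E[z^S]$.

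As for the obstacle you flag---applying Borwein--Erd\'elyi on the arc $\{q+pe^{i\theta}\}$ rather than on the unit circle---the paper avoids it by working in the $u$-variable from the start. Setting $\rho = 1 - 1/L^2$, it applies Lemma~\ref{lemma:polynomial-bound} to $\tilde R(\rho\,\cdot\,)$ (whose coefficients $\delta_{d+j}\rho^j$ still lie in $[-1,1]$) directly on the unit circle, obtaining an evaluation point $u^* = \rho z_0$ with $|z_0|=1$. The corresponding output variable $z^* = (u^*-q)/p$ then satisfies $|z^*|^2 \le \rho < 1$, and the geometric decay of $|z^*|^j$ controls the tail of $\Psi$, so your separate input-truncation step is unnecessary as well.
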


The first ingredient in the proof of this lemma is a polynomial
identity, which is analogous to Lemma 2.1 in \cite{NP16} or Section 4
in \cite{DOS16}, but accounts for possible shifts to the input
sequence.

\begin{lemma} \label{lemma:poly-identity}
  Let $n$ and $k$ be positive integers with $k \le n$. Let $\ba =
  (a_0, a_1, \ldots, a_{n - 1})$ be a sequence of real numbers whose
  first $k$ elements are zero. Let $S$ be a random variable taking
  integer values between $0$ and $k - 1$, with $\P(S = i) = \beta_i$.

  Let $\wt{\ba} \sim \D_p(\ba^{(S+1):})$, and pad $\wt{\ba}$ with
  zeroes to the right. Then,
  \begin{equation}
    \E \left[ \sum_{j \geq 0} \wt{a}_{j} w^{j} \right] = p \( \sum_{s = 0}^{k - 1} \beta_s (pw + q)^{-s} \) \( \sum_{j=0}^{n-1} a_{j} \left( pw+q \right)^{j} \).
  \end{equation}
\end{lemma}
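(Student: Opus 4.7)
\medskip

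\noindent\textbf{Proof proposal.} The plan is to first establish the unshifted version of the identity and then reduce the shifted case to it by conditioning on $S$.

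First, I would condition on $S = s$. Given $S = s$, the deletion channel is applied to the sequence $(a_s, a_{s+1}, \ldots, a_{n-1})$. For the unshifted version, the key computation is that if a sequence $\bb = (b_0, \ldots, b_{m-1})$ is passed through the deletion channel, then the index $t_{j+1}$ of the $(j+1)$-th retained symbol satisfies
\[ \P(t_{j+1} = i) = \binom{i}{j} p^{j+1} q^{i-j}, \]
since we need position $i$ to be retained and exactly $j$ of the positions $0, 1, \ldots, i-1$ to be retained. Multiplying by $b_i$, summing over $i$, and then multiplying by $w^j$ and summing over $j$, the binomial sum collapses to give
\[ \E\left[\sum_{j \ge 0} \wt{b}_j w^j\right] = p \sum_{i=0}^{m-1} b_i (pw + q)^i. \]

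Applying this identity with $\bb = (a_s, a_{s+1}, \ldots, a_{n-1})$ (so $b_i = a_{i+s}$), I obtain
\[ \E\left[\sum_{j \ge 0} \wt{a}_j w^j \,\Big|\, S = s\right] = p \sum_{i=0}^{n-s-1} a_{i+s} (pw+q)^i = p\,(pw+q)^{-s} \sum_{m=s}^{n-1} a_m (pw+q)^m, \]
after the substitution $m = i + s$.

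At this point the hypothesis that $a_0 = a_1 = \cdots = a_{k-1} = 0$ enters: since $s \le k-1$, the terms $m = s, s+1, \ldots, k-1$ in the inner sum all vanish, so the sum from $m = s$ to $n - 1$ is equal to the sum from $m = 0$ to $n - 1$. This is the reason the polynomial factor on the right is independent of $s$. Finally, averaging over $S$ by weighting with $\beta_s = \P(S = s)$ factors the two sums and produces the claimed identity.

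The proof is essentially bookkeeping; the only real step is the binomial identity for the unshifted channel, which is classical. The main thing to be careful about is the indexing convention (0-indexed here, versus the 1-indexed convention used elsewhere in the paper) and ensuring that the zero-padding of $\wt{\ba}$ makes the generating function $\sum_{j \ge 0} \wt{a}_j w^j$ well-defined and equal to the finite sum over retained positions. There is no substantive obstacle.
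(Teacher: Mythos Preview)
Your proposal is correct and follows essentially the same approach as the paper: condition on $S=s$, use the binomial computation to obtain $p(pw+q)^{\ell-s}$ (the paper reduces by linearity to a single nonzero entry $a_\ell=1$, while you prove the unshifted identity for a general sequence, but this is the same calculation), and then average over $S$ using the vanishing of $a_0,\ldots,a_{k-1}$ to make the polynomial factor independent of $s$. The only thing to watch is the indexing, which you already flagged.
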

\begin{proof}
  Let $\ell$ be any integer with $k \le \ell \le n - 1$. By linearity,
  it suffices to show the result for $\ba$ having all zeroes except
  $a_\ell = 1$. We now restrict to this case.

  Let us condition on $S = s$ and analyze for each $j$ the probability
  $\P(\wt{a}_j = 1)$ that the single non-zero entry $a_\ell$ gets
  shifted to position $j$ without being deleted. Clearly, if $j > \ell
  - s$, then $\P(\wt{a}_j = 1 \mid S = s) = 0$. Otherwise, the
  probability must account for the retention of $a_\ell$ and the
  retention of exactly $j$ of the first $\ell - s$ entries of
  $\ba^{(s+1):}$. Note that the condition $k \le \ell$ ensures that
  $\ell - s > 0$. Thus for $j \le \ell-s$,
  \[ \P\( \wt{a}_j = 1 \mid S = s \) = p \cdot \binom{\ell - s}{j} p^j q^{\ell - s - j}, \]
  so that
  \[ \E\( \sum_{j \geq 0} \wt{a}_{j} w^{j} \,\middle|\, S = s \) = p \sum_{j = 0}^{\ell - s} \binom{\ell - s}{j} p^j q^{\ell - s - j} w^j = p \cdot (pw + q)^{\ell - s}. \]
  Taking the expectation over $S$, we conclude that
  \begin{align*}
    \E\( \sum_{j \geq 0} \wt{a}_{j} w^{j} \) &= \sum_{s = 0}^{k - 1} \beta_s \E\( \sum_{j \geq 0} \wt{a}_{j} w^{j} \,\middle|\, S = s \) = p \sum_{s = 0}^{k - 1} \beta_s (pw + q)^{\ell - s} \\
    &= p \( \sum_{s = 0}^{k - 1} \beta_s (pw + q)^{-s} \) (pw + q)^\ell,
  \end{align*}
  which completes the proof.
\end{proof}

As in \cite{NP16} and \cite{DOS16}, we also use the following
Littlewood-type estimate of Borwein and Erd{\'e}lyi.

\begin{lemma}[Borwein and Erd{\'e}lyi, special case of Corollary 3.2 in \cite{BE97}] \label{lemma:polynomial-bound}
  There exists a finite constant $C$ such that the following holds.
  Let $A(z)$ be a polynomial with coefficients in $[-1, 1]$ and $A(0)
  = 1$. Denote by $\gamma_L$ the arc $\left\{ e^{i \theta} : -1/L \leq
  \theta \leq 1/L \right\}$. Then $\max_{z \in \gamma_L} |A(z)| \geq
  e^{-CL}$.
\end{lemma}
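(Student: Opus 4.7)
The plan is to sketch the complex-analytic strategy underlying the Borwein-Erd{\'e}lyi theorem. The central feature that makes the statement non-trivial is that the constant $C$ is independent of $\deg A$: the bound uses only $L$, the normalization $A(0)=1$, and the coefficient constraint $a_i \in [-1,1]$. A direct approach via Jensen's formula on the unit disk falls short of degree-independence, so one must instead work on a domain adapted to the arc itself.

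As a warm-up, subharmonicity of $\log|A|$ on the unit disk (Jensen's formula, valid since $A(0) = 1$) gives
\[ 0 = \log|A(0)| \leq \frac{1}{2\pi}\int_{-\pi}^{\pi} \log|A(e^{i\theta})|\, d\theta. \]
Splitting the integral over $\gamma_L$ and its complement, and writing $M = \max_{\gamma_L}|A|$ and $K = \sup_{|\theta|>1/L}|A(e^{i\theta})|$, this rearranges to $\log M \geq -\pi L \log K$. The crude coefficient estimate $K \leq \deg A + 1$ then yields only the degree-dependent bound $\log M \geq -CL\log(\deg A)$, so a more refined tool is needed.

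The main obstacle is eliminating the dependence on $\deg A$, and this forces one to leave the unit disk and work on $\Omega_L := \overline{\mathbb{C}} \setminus \gamma_L$. Let $\phi : \Omega_L \to \{|w|>1\}$ be the Riemann map sending $\gamma_L$ to the unit circle and fixing $\infty$; explicitly, $\phi$ has a Joukowski-type form obtained by composing $z \mapsto z + z^{-1}$ with a suitable M{\"o}bius transformation adapted to the endpoints of $\gamma_L$. The function $\log|A \circ \phi^{-1}|$ is subharmonic on $\{|w|>1\}$ and equal to $0$ at $w_0 := \phi(0)$, and its Poisson/harmonic-measure representation at $w_0$ expresses this value as a weighted average of the boundary values of $\log|A|$ on $\gamma_L$ plus a contribution from infinity controlled by the coefficient hypothesis $a_i \in [-1,1]$. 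The key quantitative input is that the harmonic measure of $\gamma_L$ seen from $0 \in \Omega_L$ is bounded below by a constant multiple of $1/L$; combined with the coefficient bound, this forces $\log M \geq -CL$ with $C$ absolute.

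I expect the most delicate technical step to be this lower bound on harmonic measure, which requires tracking the square-root behavior of $\phi$ at the endpoints of the arc, where the conformal map degenerates. A second subtlety is controlling the growth of $A \circ \phi^{-1}$ at $|w| = \infty$ uniformly in $\deg A$; here one uses the fact that the coefficient constraint $a_i \in [-1,1]$ translates, through $\phi^{-1}$, into a bound depending only on the geometry of $\gamma_L$ rather than on how many coefficients $A$ has. Once these two pieces are in place, the conclusion $\max_{\gamma_L}|A| \geq e^{-CL}$ follows by routine Poisson-integral manipulations.
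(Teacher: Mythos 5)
The paper does not prove this lemma; it is imported verbatim as a special case of Corollary 3.2 in Borwein--Erd\'elyi \cite{BE97} and used as a black box, so there is no internal argument to compare against. Judged on its own, your sketch correctly identifies the central difficulty (degree independence), but the conformal-mapping mechanism you describe has a genuine gap. You claim that the coefficient hypothesis $a_i \in [-1,1]$ ``controls the growth of $A \circ \phi^{-1}$ at $|w|=\infty$ uniformly in $\deg A$.'' It does not: for $|z|>1$ the coefficient bound only yields $|A(z)| \le \sum_{j} |z|^j$, which grows like $|z|^{\deg A}$, and since your $\phi^{-1}$ fixes $\infty$ with linear growth there, $|A(\phi^{-1}(w))|$ also grows like $|w|^{\deg A}$ as $w\to\infty$. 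To make $\log|A \circ \phi^{-1}|$ a bona fide subharmonic function on the exterior disk including $\infty$, one must subtract $(\deg A)\log|w|$; but then the resulting Poisson estimate at $w_0 = \phi(0)$ acquires an additive term $(\deg A)\, g_{\Omega_L}(0,\infty)$, where $g_{\Omega_L}(\cdot,\infty)$ is the Green's function of $\Omega_L$ with pole at $\infty$, and $g_{\Omega_L}(0,\infty)$ is a fixed positive quantity of order $\log L$. The final bound therefore comes out of the form $\log M \ge -c\,(\deg A)\log L$, not $-CL$ --- exactly the degree dependence you set out to eliminate. Your harmonic-measure assertion is also off: the harmonic measure of $\gamma_L$ from $0$ in $\Omega_L$ is exactly $1$, not $\Theta(1/L)$, because $\gamma_L$ is the entire boundary of $\Omega_L$.

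The coefficient hypothesis is degree-independent only on the \emph{inside} of the disk, where it gives $|A(z)| \le \frac{1}{1-|z|}$ for $|z|<1$ with no dependence on $\deg A$; together with the normalization $A(0)=1$, this interior bound is what the Borwein--Erd\'elyi argument (and the self-contained exposition in the appendix of Nazarov--Peres \cite{NP16}) actually leverages, by setting up the comparison in a region where the interior estimate is available rather than in the exterior slit domain $\Omega_L$. Any repair of your sketch must relocate the potential theory so that $\frac{1}{1-|z|}$ does the work you currently assign, incorrectly, to a nonexistent degree-independent bound at infinity.
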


We now carry out the proof of Lemma
\ref{lemma:distinguishing-with-shifts} using these two ingredients.

\begin{proof}[Proof of Lemma \ref{lemma:distinguishing-with-shifts}]
  For a fixed value of $p$, clearly it is enough to prove the
  statement for sufficiently large $n$. We will assume implicitly at
  various points that $n$ is sufficiently large.

  Write $\beta_j = \P(S = j)$, let $a_j = x_{j+1} - x'_{j+1}$, and
  let $\ba = (a_j)_{j = 0}^{n - 1}$. Define the polynomials
  \[ P(z) = \sum_{j = 0}^{k - 1} \beta_j z^j, \quad Q(z) = \sum_{j = 0}^{n - 1} a_j z^j, \quad\text{and}\quad A(z) = p \cdot P(z^{-1})Q(z). \]
  Let $\ell$ be the smallest index for which $a_\ell \ne 0$; note that
  by our hypotheses, $\ell \le n$. Define $\wt{Q}(z) =
  \frac{1}{z^\ell} Q(z)$, so that $|\wt{Q}(0)| = 1$.

  For convenience, let $L = n^{1/3}$, and define $\rho = 1 -
  1/L^2$. Applying Lemma \ref{lemma:polynomial-bound} to the function
  $\wt{Q}(\rho z)$, there exists $z_0 = e^{i\theta}$ with $-
  \frac{p}{10L} \le \theta \le \frac{p}{10L}$ and $|\wt{Q}(\rho z_0)|
  \ge e^{-CL/p}$.

  We next lower bound $|P(\rho^{-1} z_0^{-1})|$. Let $\wt{P}(z) =
  z^{-\E S} P(z)$, which is an analytic function on the right
  half-plane. For all $z$ in the right half-plane satisfying $1 \le |z| \le
  \rho^{-1}$, differentiating $\wt{P}$ gives
  \begin{align*}
    |\wt{P}'(z)| &= \left| \sum_{j = 0}^{k - 1} (j - \E S) \beta_j z^{j - \E S - 1} \right| \le \sum_{j = 0}^{k - 1} |j - \E S| \cdot |z|^{j - \E S - 1} \\
    &\le \rho^{-k} \cdot \E[ |S - \E S| ] \le \rho^{-k} L \le e^{\frac{1.1k}{L^2}} \cdot L \le 4L,
  \end{align*}
  where we have used $\E[ |S - \E S| ] \le L$ and $k \le L^2$. Also,
  \begin{align*}
    |\rho^{-1} z_0^{-1} - 1| &= \rho^{-1} |1 - \rho z_0| \le |z_0 - 1| + \rho^{-1}(1 - \rho) \\
    &\le \frac{p}{10L} + \frac{2}{L^2} \le \frac{p}{8L}.
  \end{align*}
  Consequently,
  \begin{align*}
    |P(\rho^{-1}z_0^{-1})| &= \rho^{-\E S} |\wt{P}(\rho^{-1}z_0^{-1})| \ge 1 - |\wt{P}(\rho^{-1}z_0^{-1}) - 1| \\
    &= 1 - \left| \int_1^{\rho^{-1}z_0^{-1}} \wt{P}'(z) \,dz \right| \ge 1 - |\rho^{-1}z_0^{-1} - 1| \cdot 4L \\
    &\ge 1 - \frac{p}{2} \ge \frac{1}{2}.
  \end{align*}
  Thus,
  \[ |A(\rho z_0)| = p \cdot |P(\rho^{-1}z_0^{-1})| \cdot \rho^\ell \cdot |\wt{Q}(\rho z_0)| \ge \frac{p}{2} \cdot e^{- \frac{1.1n}{L^2} - \frac{CL}{p}} \ge e^{- \frac{(C + 2)L}{p}}. \]

  Next, define $w = 1 + \frac{\rho z_0 - 1}{p}$, so that $\rho z_0 =
  pw + q$. We have that
  \begin{align*}
    |w|^2 &= 1 + \frac{2}{p}(\rho \cdot \Re(z_0) - 1) + \frac{1}{p^2}|\rho z_0 - 1|^2 \\
    &\le 1 + \frac{2}{p}\(\rho - 1\) + \frac{\rho^2}{p^2} |\rho^{-1} z_0^{-1} - 1|^2 \\
    &\le 1 - \frac{2}{L^2} + \frac{1}{64L^2} \le \rho.
  \end{align*}
  Let $\wt{\ba} \sim \D_p(\ba^{(S+1):})$. By Lemma
  \ref{lemma:poly-identity},
  \[ \left| \E \left[ \sum_{j \geq 0} \wt{a}_{j} w^{j} \right] \right| = |A(\rho z_0)| \ge e^{- \frac{(C + 2)L}{p}}. \]
  Now, take $C_p$ to be an integer larger than $\frac{C + 4}{p}$. Note
  that
  \[ \left| \sum_{j = C_p n}^\infty \E[\wt{a}_j] w^j \right| \le \sum_{j = C_p n}^\infty \rho^j \le L^2 \rho^{C_p n} \le \frac{1}{2} \cdot e^{- \frac{(C + 2)L}{p}}. \]
  Hence,
  \[ \left| \E \left[ \sum_{j = 0}^{C_p n - 1} \wt{a}_{j} w^{j} \right] \right| \ge \frac{1}{2} \cdot e^{- \frac{(C + 2)L}{p}} \ge e^{-(C_p - 1)L}, \]
  and therefore, we must have for some $j$ with $0 \le j \le C_p n - 1$ that
  \[ \left| \P(\wt{x}_j = 1) - \P(\wt{x}'_j = 1) \right| = |\E \wt{a}_j| \ge |\E \wt{a}_j w^j| \]
  \[ \ge \frac{1}{C_p n} e^{-(C_p - 1)L} \ge e^{-C_pL}, \]
  as desired.
\end{proof}

\section{Proof of Theorem \ref{thm:subpoly-reconstruction}} \label{sec:main-proof}

Throughout this section, we fix a deletion probability $q <
\frac{1}{2}$ (and hence a retention probability $p > \frac{1}{2}$). In
addition, all of our inequalities are meant to apply for $n$
sufficiently large (i.e. larger than a constant depending only on
$p$).

Let $C_p$ be the larger of the two constants in Lemmas
\ref{lemma:trackability} and \ref{lemma:distinguishing-with-shifts},
and let $c_p$ be the constant in Lemma
\ref{lemma:distinguish-centers}. We define the following integers:
\[ M = \ceil{C_p \log n}, \qquad K_1 = 40M, \qquad K_0 = \ceil{K_1^{1/2}}, \qquad K_2 = \ceil{ \frac{10}{c_p} K_1^{1/2}\log n }. \]
It is helpful to keep in mind that $K_0 = \Theta_p(\log^{1/2} n)$,
$K_1 = \Theta_p(\log n)$, and $K_2 = \Theta_p(\log^{3/2} n)$.

Recall the high-level strategy of the proof from Section
\ref{subsec:outline}: we align traces against what we have
reconstructed so far, and then we use bit statistics to reconstruct
additional bits. The alignment step in particular relies on the input
$\bX$ having certain special properties which don't hold for all
strings but do hold for ``most''. We encapsulate these properties in
the following definition.

\begin{definition} \label{defgood}
  Let $\gamma_p < 1$ be the constant from Lemma
  \ref{lemma:distinguish-centers}. We say that a string $\bx \in
  \{0,1\}^n$ is \bemph{good} if the following conditions are
  satisfied:
  \begin{enumerate}
  \item[(i).] $\bx$ is $(M, C_p)$-trackable,
  \item[(ii).] there is no run of $M$ consecutive identical bits in $\bx$,
  \item[(iii).] among any $K_2$ consecutive bits of $\bx$, there is a
    block of $2K_1$ of them that is $(K_0, \gamma_p)$-distinguishable.
  \end{enumerate}
\end{definition}

As the next lemma shows, a random string is good with high probability.

\begin{lemma} \label{lemma:most-good}
  Let $\bX \in \{0,1\}^n$ be drawn uniformly at random. Then,
  \[ \P(\text{$\bX$ is good}) \ge 1 - O_p\(\frac{1}{n}\). \]
\end{lemma}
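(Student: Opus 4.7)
The plan is to verify each of the three conditions in Definition \ref{defgood} separately, show each fails with probability $O_p(1/n)$, and combine by a union bound.

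First, condition (i) follows immediately from Lemma \ref{lemma:trackability}: since $M = \ceil{C_p \log n}$ is at least $C_p \log n$, that lemma already says $\bX$ is $(M, C_p)$-trackable except with probability $O_p(1/n)$.

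Second, for condition (ii), I would compute directly: for a fixed starting position $i$, the probability that $X_i = X_{i+1} = \cdots = X_{i+M-1}$ equals $2^{1-M}$. A union bound over the at most $n$ choices of $i$ gives $2n \cdot 2^{-M}$. Provided $C_p$ is taken sufficiently large---which is permitted, since $C_p$ is defined as a maximum of earlier constants and may be enlarged freely---we have $2^{-M} \le n^{-3}$, so this bound is $O(1/n)$.

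Third, for condition (iii), I would fix a window $W$ of $K_2$ consecutive positions in $\bX$ and partition it into $L := \floor{K_2/(2K_1)}$ disjoint blocks of length $2K_1$. Each such block is a uniform random string in $\{0,1\}^{2K_1}$, so Lemma \ref{lemma:distinguish-centers} (applied with its $n$ taken as $K_1$ and its $a$ as $K_0 = \ceil{K_1^{1/2}}$) guarantees that any given block is $(K_0, \gamma_p)$-distinguishable with probability at least $1 - e^{-c_p K_1^{1/2}}$. Crucially, distinguishability is a deterministic function of the string alone, so the events corresponding to disjoint blocks are independent; therefore the probability that no block in $W$ is distinguishable is at most $\exp(-c_p K_1^{1/2} L)$. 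The choice $K_2 = \ceil{(10/c_p) K_1^{1/2} \log n}$ was made precisely so that $c_p K_1^{1/2} L \ge 5 \log n$ for large $n$, yielding $n^{-5}$. A final union bound over the at most $n$ possible starting positions of $W$ gives overall failure probability $O(1/n)$ for condition (iii).

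There is no real obstacle in this proof; the only care required is in bookkeeping the parameter choices. The constants $M, K_0, K_1, K_2$ were tuned precisely so that all three of the above union bounds close with room to spare, and the genuine analytic content has already been carried out in Lemmas \ref{lemma:trackability} and \ref{lemma:distinguish-centers}, both of which are invoked here as black boxes.
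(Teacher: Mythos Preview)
Your proposal is correct and follows essentially the same approach as the paper: handle conditions (i)--(iii) separately via Lemmas \ref{lemma:trackability} and \ref{lemma:distinguish-centers} plus a direct computation for (ii), then union bound. The only minor differences are cosmetic (your exponent $5\log n$ in (iii) versus the paper's $2\log n$, and your explicit remark that distinguishability depends only on the block so disjoint-block events are independent---a point the paper uses implicitly).
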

\begin{proof}
  It suffices to show that each condition in  \ref{defgood} holds with probability at
  least $1 - O_p\(\frac{1}{n}\)$. For condition (i), this is immediate
  by Lemma \ref{lemma:trackability}.

  To establish condition (ii), note that the probability for $M$
  i.i.d.\ uniform bits to be identical is $2^{1-M}$. Union bounding
  over all blocks of $M$ consecutive bits in $\bX$, we find that (ii)
  holds with probability at least $1 - n \cdot 2^{1-M} \ge 1 -
  O\(\frac{1}{n}\)$.

  Finally, for condition (iii), note that any $K_2$ consecutive bits
  contain at least $\floor{K_2/2K_1} \ge \frac{2 \log n}{c_p \sqrt{K_1}}$ disjoint blocks of size
  $2K_1$. By Lemma \ref{lemma:distinguish-centers}, the probability
  that a single block fails to be $(K_0, \gamma_p)$-distinguishable is
  at most $e^{-c_p\sqrt{K_1}}$. Thus, the probability that
  none of these blocks is $(K_0, \gamma_p)$-distinguishable is at most
  \[ \exp\( -\frac{2 \log n}{c_p \sqrt{K_1}} \cdot c_p \sqrt{K_1} \) = \frac{1}{n^2}. \]
  Union bounding over at most $n$ possible blocks of
  $K_2$ consecutive bits shows that condition (iii) also holds with
  probability at least $1 - O\(\frac{1}{n}\)$.
\end{proof}

\subsection{Alignment}

Suppose $\bx$ is a bit string that we know, and let $m \le |\bx|$ be
some position in $\bx$. Suppose that we also have a sample $\wt{\bx}$
from the deletion channel applied to $\bx$ (or some longer string
having $\bx$ as a prefix). As described in Section
\ref{subsec:outline}, we would like to identify (with high
probability) a bit of $\wt{\bx}$ that was originally positioned near
the $m$-th bit of $\bx$. This motivates the following definition.

\begin{definition} \label{def:alignment-rule}
  An \bemph{alignment rule} is a function $\L$ which takes as input a
  bit string $\bx$, an index $m \le |\bx|$, and another bit string
  $\by$. It outputs a value $\L(\bx, m, \by) \in \{ 1, 2, \ldots ,
  |\by| - 1, |\by|, \infty \}$.

  In addition, we require that $\L$ satisfy the following adaptedness
  property with respect to $\by$: whenever $\L(\bx, m, \by) < \infty$,
  for any other string $\by'$ identical to $\by$ in their first
  $\L(\bx, m, \by)$ bits, we have $\L(\bx, m, \by') = \L(\bx, m,
  \by)$.
\end{definition}

Let us explain the conceptual meaning of this definition. Although it
is not strictly required for the definition, we emphasize that for our
purposes, $\by$ will be a sample from the deletion channel applied to
a string whose prefix is $\bx$. The idea is that bits near the $m$-th
position of $\bx$ should end up near the $\L(\bx, m, \by)$-th position
in $\by$ after going through the deletion channel; in this way, the
position $m$ in $\bx$ is ``aligned'' with position $\L(\bx, m, \by)$
in $\by$. When $\L(\bx, m, \by) = \infty$, it means that the rule
cannot reliably locate which bits of $\by$ came from around the $m$-th
position of $\bx$.

The adaptedness condition says that an alignment rule must proceed by
examining the bits of $\by$ in order one by one, and either outputting
the current position or giving up and outputting $\infty$. In
particular, we do not allow alignment rules to look ahead in the
string $\by$ before deciding whether a previous position should be the
output. The purpose of this requirement is to ensure that the deletion
pattern after our alignment position is independent of the alignment
itself.

The next lemma constructs a particular alignment rule that has good
quantitative bounds on the quality of the alignment.

\begin{lemma} \label{lemma:m-alignment}
  Let $k$ be a given integer with $K_2 \le k \le n/2$, and let $\bx_0
  \in \{0,1\}^k$ be a string of length $k$. Then, there exists an
  index $m$ with $k - K_2 + K_1 \le m \le k - K_1$ and an alignment
  rule $\L$ with the following property:

  For any good sequence $\bx \in \{0, 1\}^n$ with $\bx_0$ as a prefix,
  taking $(\bt, \wt{\bx}) \sim \D^*_p(\bx)$, we have
  \begin{enumerate}
  \item[(i).] $\P_\bx(\L(\bx_0, m, \wt{\bx}) < \infty) \ge
    \frac{1}{2} p^{2K_0}$
  \item[(ii).] $\P_\bx(|t_{\L(\bx_0, m, \wt{\bx})} - m| \ge K_1 \mid
    \L(\bx_0, m, \wt{\bx}) < \infty) \le n^{-\Omega(1)}$
  \item[(iii).] $\P_\bx(|t_{\L(\bx_0, m, \wt{\bx})} - m| \ge 10K_0 \mid
    \L(\bx_0, m, \wt{\bx}) < \infty) \le e^{-\Omega_p(K_0)}$.
  \end{enumerate}
\end{lemma}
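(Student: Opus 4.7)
The plan is to implement the two-stage alignment scheme from Section \ref{subsec:outline}: a coarse ``initial alignment'' via greedy matching (Section \ref{sec:align-greedy}) followed by a ``refined alignment'' via searching for a fixed anchor substring (Section \ref{sec:align-seq}). First I would use condition (iii) of goodness to pick $m$: among the $K_2$ consecutive bits $\bx_0^{(k-K_2+1):k}$, I select a block of length $2K_1$ that is $(K_0, \gamma_p)$-distinguishable, take $m$ to be its midpoint (so automatically $k - K_2 + K_1 \le m \le k - K_1$), and set the anchor $\bw := \bx_0^{(m-K_0+1):(m+K_0)}$. The rule $\L(\bx_0, m, \by)$ then reads $\by$ left to right while maintaining the greedy match $g_j(\by, \bx_0)$ and takes $j_0$ to be the first index with $g_{j_0} \ge m - 5M$. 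It then continues reading and searches for $\bw$ in the window $\by^{(j_0+1):(j_0+L)}$ with $L := \ceil{K_1/4}$, outputting the ending index $\ell$ of the first such occurrence if any, and $\infty$ otherwise. Adaptedness holds because, whenever the output $\ell$ is finite, it is determined by the first $\ell$ bits of $\by$.

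Two high-probability deterministic facts I would establish next are: (a) $t_{j_0} = m - \Theta(M)$, and (b) $t_{j_0 + L} < m + K_1$. For (a), I would combine condition (ii) of goodness (no run of length $M$, so greedy matching advances by at most $M$ per step, giving $g_{j_0} \in [m - 5M, m - 4M]$) with condition (i), i.e.\ $(M, C_p)$-trackability, which yields $t_{j_0} - g_{j_0} = O(M)$ with probability $1 - O(1/n)$. For (b), a Chernoff bound on the geometric sum $t_{j_0+L} - t_{j_0}$ with mean $L/p \le K_1/2$ (since $p > 1/2$) gives deviation probability $e^{-\Omega(K_1)} = n^{-\Omega(1)}$. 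Property (i) then follows immediately: the event that all $2K_0$ bits of $\bw$ survive the deletion channel has probability $p^{2K_0}$, and on this event combined with (a) and (b) the true occurrence of $\bw$ ends inside the search window, so $\P(\L < \infty) \ge (1 - o(1)) p^{2K_0} \ge \tfrac12 p^{2K_0}$. Property (ii) is also immediate: when $\L < \infty$ we have $m - 5M < t_\ell \le t_{j_0+L} < m + K_1$ except on a joint failure event of probability $n^{-\Omega(1)}$, which stays $n^{-\Omega(1)}$ after dividing by $\tfrac12 p^{2K_0} = e^{-O(K_0)} \gg n^{-\Omega(1)}$.

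For the sharper property (iii) I would apply Lemma \ref{lemma:distinguishability-conseqeunce} to the block $\bx^{(m-K_1+1):(m+K_1)}$ viewed as a standalone length-$2K_1$ input to the deletion channel. Its trace is distributed as the contiguous sub-interval of $\wt{\bx}$ consisting of positions whose preimage under $\bt$ lies in the block; facts (a) and (b) place the search window strictly inside that sub-interval, so occurrences of $\bw$ in the search window correspond to occurrences of $\bw$ in the block's standalone trace. Since $10K_0 \ll M$, the distinguishability conclusion --- that the first occurrence in the block's trace has contributing positions in $[m - 10K_0, m + 10K_0]$ --- forces the last contributing bit to lie strictly to the right of $t_{j_0}$, so this first occurrence is also the first one seen by $\L$ inside the search window. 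Hence $t_\ell \in [m - 10K_0, m + 10K_0]$ except on a bad event of probability $e^{-\Omega_p(K_0)} \cdot p^{2K_0}$, which, divided by the lower bound from (i), gives the required $e^{-\Omega_p(K_0)}$ conditional bound. The hard part will be carrying out this ``transfer of distinguishability'' rigorously: one needs to check that the conditioning introduced by the initial alignment step (on $g_1, \ldots, g_{j_0}$ and the retention pattern before $t_{j_0}$) does not bias the block's trace statistics, and Lemma \ref{lemma:conditional-independence} --- together with the independence of deletions across bits --- is what makes this decoupling possible.
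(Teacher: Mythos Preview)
Your proposal is essentially the paper's proof: the same choice of $m$ via goodness condition (iii), the same anchor $\bw$, the same two-stage rule (greedy matching to get a coarse position, then search a short window for $\bw$), and the same use of Lemma~\ref{lemma:distinguishability-conseqeunce} for property (iii). The constants differ slightly (you use $m-5M$ and window length $\ceil{K_1/4}=10M$; the paper uses $m-8M$ and $16M$), but the arithmetic goes through either way. Your arguments for (i) and (ii) match the paper's almost exactly.

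The one place you diverge is the final paragraph, and there you are anticipating a difficulty that does not exist. No decoupling is required, and Lemma~\ref{lemma:conditional-independence} is not used at all in the paper's proof of this lemma (indeed it could not be, since here $\bx$ is a \emph{fixed} good string, not a random one). The point is that the bound from Lemma~\ref{lemma:distinguishability-conseqeunce} is an \emph{unconditional} probability bound on the event
\[
B \;=\; \Big\{\ind_\bw^I(\wt{\bx})<\infty \ \text{and}\ \bt(\indset_\bw^I(\wt{\bx}))\not\subseteq[m-10K_0,m+10K_0]\Big\},
\]
where $I=\bt^{-1}([m-K_1+1,m+K_1])$. On your events (a) and (b), the search window lies inside $I$ and starts strictly to the left (in $\bt$-position) of $m-10K_0$; from this one checks the deterministic inclusion
\[
\{\L<\infty\}\cap\{|t_\ell-m|>10K_0\}\cap(a)\cap(b)\ \subseteq\ B,
\]
so $\P_\bx(\L<\infty,\ |t_\ell-m|>10K_0)\le \P_\bx(B)+\P_\bx((a)^c)+\P_\bx((b)^c)$. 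This is a plain union bound over unconditional probabilities; no conditioning on $g_1,\dots,g_{j_0}$ or on the retention pattern enters, and nothing needs to be ``unbiased''. So your ``hard part'' is in fact the easy part.
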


Informally speaking, the properties in the above lemma should be
interpreted as saying that (i) the alignment succeeds with some
not-too-small probability; (ii) it is extremely likely to align within
$K_1$ of the correct position; and (iii) it usually aligns within
$10K_0$. Before giving the proof, we first establish an auxiliary lemma.

\begin{lemma} \label{lemma:greedy-alignment-bound}
  Suppose $\bx \in \{0,1\}^n$ is a good sequence, and suppose $(\bt,
  \wt{\bx}) \sim \D^*_p(\bx)$. Consider any $k \le n/2$, and let
  $\ell$ be the smallest index such that $g_\ell(\wt{\bx}, \bx) \ge
  k$. Then,
  \[ \P_\bx\Big(\text{$\ell$ exists and } k \le t_\ell \le k + 4M \Big) \ge 1 - \frac{1}{n} \]
  for all sufficiently large $n$.
\end{lemma}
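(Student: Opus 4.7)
The plan is to combine two structural properties of a good sequence $\bx$: the $(M,C_p)$-trackability (condition (i) of Definition~\ref{defgood}) caps how far the greedy pointer $g_i(\wt{\bx},\bx)$ can lag behind $t_i$, while the absence of runs of $M$ identical bits (condition (ii)) caps how far $g_i$ can leap in a single step. Together these will sandwich $t_\ell$ in $[k, k+4M]$.

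Concretely, I would introduce two events. The trackability event
\[ E_1 := \Big\{ \max_i \big( t_i - g_i(\wt{\bx},\bx) \big) < 3M \Big\} \]
has $\P_\bx(E_1^c) \le e^{-(3M - M)/C_p} \le e^{-2\log n} = n^{-2}$, where the first inequality is the definition of $(M,C_p)$-trackability applied with $\lambda = 3M$, and the second uses $M \ge C_p\log n$. The trace-extension event
\[ E_2 := \{\text{at least one index in } \{k+3M, \ldots, n\} \text{ is retained}\} \]
satisfies $\P_\bx(E_2^c) \le q^{n-k-3M+1} \le q^{n/4} \le n^{-2}$ for large $n$ (using $k \le n/2$ and $q < 1/2$). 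A union bound then gives $\P_\bx(E_1 \cap E_2) \ge 1 - 2/n^2 \ge 1 - 1/n$, so it suffices to verify the conclusion on $E_1 \cap E_2$.

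On $E_2$, let $\ell^{**}$ be the smallest index with $t_{\ell^{**}} \ge k+3M$; on $E_1$ we have $g_{\ell^{**}} \ge t_{\ell^{**}} - 3M \ge k$, so $\ell$ exists and $\ell \le \ell^{**} \le |\bt|$. The lower bound $t_\ell \ge g_\ell \ge k$ follows from the general inequality $g_i \le t_i$ (noted in the text) together with the definition of $\ell$. For the upper bound, minimality of $\ell$ and the convention $g_0 := 0$ give $g_{\ell-1} \le k - 1$; condition (ii) then implies that any $M$ consecutive bits of $\bx$ contain both $0$ and $1$, so starting from position $g_{\ell-1}+1$ a match for $\wt{x}_\ell$ is found within $M$ steps, yielding $g_\ell \le g_{\ell-1} + M \le k + M - 1$. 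Combined with $E_1$ this gives $t_\ell < g_\ell + 3M \le k + 4M - 1$, as required.

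The only real subtlety is calibrating the trackability threshold: $\lambda$ must be small enough that $g_\ell + \lambda \le k + 4M$ (since $g_\ell$ can reach $k + M - 1$) and yet large enough that $e^{-(\lambda - M)/C_p} \le 1/n$. The choice $\lambda = 3M$ threads this needle precisely because $M = \ceil{C_p \log n}$ forces $2M/C_p \ge 2\log n$. The remaining checks---the case $\ell = 1$ (handled by the convention $g_0 = 0$) and the applicability of the no-long-runs bound at position $g_{\ell-1} + M \le n/2 + M \le n$---are routine consequences of $k \le n/2$.
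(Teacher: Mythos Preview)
Your proof is correct and follows essentially the same approach as the paper: both use the no-long-runs condition to bound $g_\ell \le k+M$ and then apply $(M,C_p)$-trackability with threshold $3M$ to conclude $t_\ell \le k+4M$. Your argument for the existence of $\ell$ (via retention of a single bit past $k+3M$ combined with $E_1$) is actually cleaner than the paper's, which establishes existence through a three-term decomposition involving concentration of $|\wt{\bx}|$ and $t_{n'}$.
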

\begin{proof}
  Let $n' = \floor{\frac{5}{6} \cdot pn}$. If $\ell$ does not exist, it
  means that $g_{|\wt{\bx}|}(\wt{\bx}, \bx) < k$ (or $\wt{\bx}$ is
  empty). This can be bounded by
  \begin{align}
    &\P_\bx(\text{$\ell$ does not exist}) \le \P_\bx(|\wt{\bx}| < n') + \P_\bx(g_{n'}(\wt{\bx}, \bx) < n/2) \nonumber \\
    &\qquad\qquad\le \P_\bx(|\wt{\bx}| < n') + \P_\bx(t_{n'} - g_{n'}(\wt{\bx}, \bx) \ge n/6) + \P_\bx(t_{n'} < 2n/3). \label{eq:ell-existence1}
  \end{align}
  Note that $|\wt{\bx}|$ is distributed as $\Binom(n, p)$, so
  $\P_\bx(|\wt{\bx}| < n') = e^{-\Omega_p(n)}$. The second term in
  \eqref{eq:ell-existence1} is at most $e^{-\Omega(n)}$ because $\bx$
  was assumed to be $(M, C_p)$-trackable. Finally, if $t_{n'} < 2n/3$,
  it means that at least $5pn/6$ out of the first $2n/3$ bits were
  retained, which also occurs with probability at most
  $e^{-\Omega_p(n)}$. Thus, all three probabilities in
  \eqref{eq:ell-existence1} are exponentially small in $n$, so
  \begin{equation} \label{eq:ell-existence2}
    \P_\bx(\text{$\ell$ does not exist}) \le \frac{1}{n^2}
  \end{equation}
  for large enough $n$.

  We now work under the assumption that $\ell$ exists. We always have
  \begin{equation} \label{eq:t_ell-bound1}
    t_\ell \ge g_\ell(\wt{\bx}, \bx) \ge k
  \end{equation}
  Since $\bx$ does not have more than $M$ consecutive identical bits,
  by the minimality of $\ell$, we must have $g_\ell(\wt{\bx}, \bx) \le
  k + M$. By $(M, C_p)$-trackability of $\bx$, we have
  \begin{equation} \label{eq:t_ell-bound2}
    \P_\bx(t_\ell > k + 4M) \le \P_\bx(t_\ell - g_\ell(\wt{\bx}, \bx) > 3M) \le \frac{1}{n^2}.
  \end{equation}
  Combining \eqref{eq:ell-existence2}, \eqref{eq:t_ell-bound1}, and
  \eqref{eq:t_ell-bound2} completes the proof.
\end{proof}

\begin{proof}[Proof of Lemma \ref{lemma:m-alignment}]
  If $\bx_0$ is not a prefix of any good sequence, then there is
  nothing to prove. Otherwise, because $\bx_0$ is a prefix of a good
  sequence, there must exist $m$ with $k - K_2 + K_1 \le m \le k -
  K_1$ such that $\bx_0^{(m-K_1+1):(m+K_1)}$ is $(K_0,
  \gamma_p)$-distinguishable. We choose such an $m$ and let $\bw =
  \bx_0^{(m-K_0+1):(m+K_0)}$ (see Figure \ref{fig:m-alignment}).

  Now, suppose $\bx$ is any good sequence having $\bx_0$ as a prefix,
  and take $(\bt, \wt{\bx}) \sim \D^*_p(\bx)$. Roughly speaking, our
  alignment rule will be to first use Lemma
  \ref{lemma:greedy-alignment-bound} to identify an index $\ell_0$ in
  $\wt{\bx}$ such that $t_{\ell_0}$ is slightly smaller than
  $m$. Then, we will look for an occurrence of $\bw$ in $\wt{\bx}$
  shortly after position $\ell_0$. If such an occurrence exists, we
  output the position of the last bit of the occurrence. If not, we
  output $\infty$.

  \begin{figure}
    \centering
    \def\svgwidth{\columnwidth}
    \input{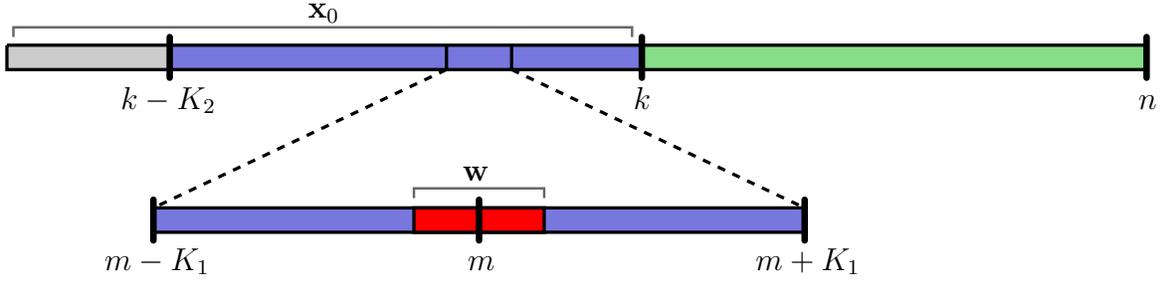}
    \caption{Illustration of positions involved in the proof of Lemma \ref{lemma:m-alignment}.}
    \label{fig:m-alignment}
  \end{figure}

  To specify the alignment rule precisely, consider any string $\by$,
  and define the statement
  \[ P_0(\by) = \text{``there exists $\ell$ such that $\infty > g_{\ell}(\by, \bx_0) \ge m - 8M$''}.\footnote{Recall that since $\by$ need not be drawn from $\D_p(\bx_0)$,
    it is possible to have $g_\ell(\by, \bx_0) = \infty$.} \] Whenever
  $P_0(\by)$ holds, take $\ell_0(\by)$ to be the smallest such
  $\ell$. Then, define
  \[ P_1(\by) = P_0(\by) \wedge \text{``$\ind^{\ell_0(\by):(\ell_0(\by)+16M)}_\bw(\by) < \infty$''}. \]
  We then define our alignment rule to be
  \[ \L(\bx_0, m, \by) = \begin{cases*}
    \ind^{\ell_0(\by):(\ell_0(\by)+16M)}_\bw(\by) + 2K_0 - 1 & if $P_1(\by)$ holds \\
    \infty & otherwise.
    \end{cases*} \]
  Note that this satisfies the adaptedness requirement for alignment
  rules. We will specifically apply the above definition with $\by =
  \wt{\bx}$, so it is convenient to define the events
  \[ E_0 = \{ \text{$P_0(\wt{\bx})$ holds} \}, \qquad E_1 = \{ \text{$P_1(\wt{\bx})$ holds} \}, \]
  and we abbreviate $\ell_0 = \ell_0(\wt{\bx})$.

  Next, we establish properties (i), (ii), and (iii). In what follows,
  the reader may find it helpful to refer to Figure
  \ref{fig:m-alignment2}. Define
  \[ F_0 = E_0 \cap \{ m - 8M \le t_{\ell_0} \le m - 4M \}. \]
  By Lemma \ref{lemma:greedy-alignment-bound}, we have
  \begin{equation} \label{eq:F0}
    \P_\bx(F_0) \ge 1 - \frac{1}{n}
  \end{equation}
  We note a subtlety in our use of the lemma: the event $E_0$ concerns
  existence of $g_\ell(\wt{\bx}, \bx_0)$, while Lemma
  \ref{lemma:greedy-alignment-bound} concerns existence of
  $g_\ell(\wt{\bx}, \bx)$. However, as long as $t_{\ell_0} \le m -
  4M$, the relevant indices are all less than $k$, so there is no
  difference between using $\bx_0$ and using $\bx$, and the lemma
  still applies.

  We can now lower bound $\P_\bx(\L(\bx_0, m, \wt{\bx}) < \infty) =
  \P_\bx(E_1)$. Conditioned on $F_0$, it is always possible for $E_1$
  to occur by retaining all the bits in positions $m-K_0+1$ through
  $m+K_0$ in $\bx$. Thus,
  \[ \P_\bx(\L(\bx_0, m, \wt{\bx}) < \infty) = \P_\bx(E_1) \ge \P_\bx(F_0) \cdot p^{2K_0} \ge \frac{1}{2} p^{2K_0}, \]
  establishing property (i).

  \begin{figure}
    \centering
    \def\svgwidth{\columnwidth}
    \input{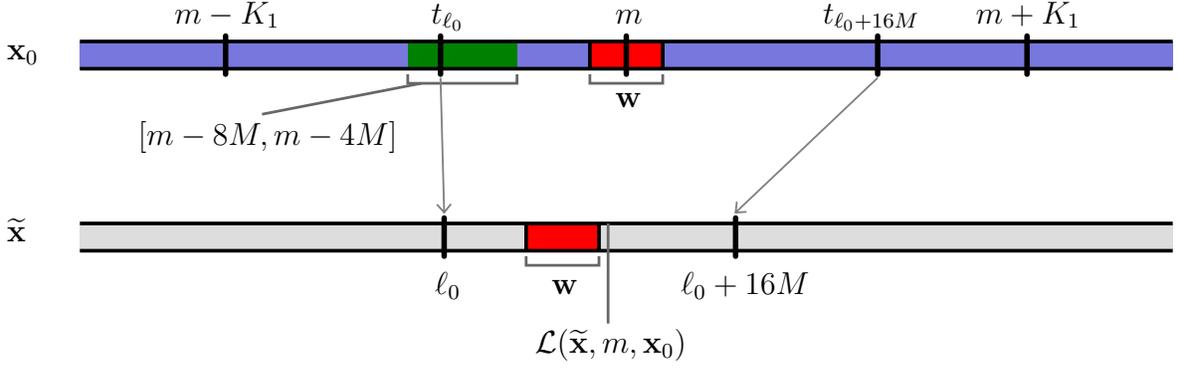}
    \caption{A possible configuration for $\bx_0$, $m$, and
      $\wt{\bx}$. In the diagram above, events $F_0$, $F_1$, and $E_1$
      all hold.}
    \label{fig:m-alignment2}
  \end{figure}

  To show property (ii), consider the event
  \[ F_1 = \{ t_{\ell_0 + 16M} \le m + K_1 \}. \]
  Note that if $F_0$ and $F_1^c$ both occur, then it means that fewer
  than $16M$ bits were retained among the positions in $\bx$ between
  $m - 4M$ and $m + K_1$. There are $K_1 + 4M = 44M$ such positions,
  so
  \begin{equation} \label{eq:F1}
    \P_\bx(F_0 \cap F_1^c) \le \P_\bx(\Binom(44M, p) < 16M) \le e^{-\Omega(M)} = n^{-\Omega(1)}.
  \end{equation}
  If $F_0$, $F_1$, and $E_1$ all occur, then we have
  \begin{align*}
    t_{\L(\bx, m, \wt{\bx})} &\ge t_{\ell_0} \ge m - 8M \ge m - K_1 \\
    t_{\L(\bx, m, \wt{\bx})} &\le t_{\ell_0 + 16M} \le m + K_1.
  \end{align*}
  Thus,
  \begin{align*}
    &\P_\bx\Big(|t_{\L(\bx, m, \wt{\bx})} - m| \le K_1 \;\Big|\; E_1 \Big) \ge \P_\bx(F_0 \cap F_1 \mid E_1) \ge 1 - \frac{\P_\bx(F_0^c) + \P_\bx(F_0 \cap F_1^c)}{\P_\bx(E_1)} \\
    &\qquad\qquad \ge 1 - n^{-\Omega(1)},
  \end{align*}
  establishing property (ii).

  Finally, we show property (iii). Let
  \[ I = \bt^{-1}\( \{m-K_1+1, m-K_1+2, \ldots , m+K_1\} \) \]
  be the set of indices in $\wt{\bx}$ which ``came from''
  $\bx_0^{(m-K_1+1):(m+K_1)}$. Note that we can regard
  $(\wt{x}_i)_{\text{$i \in I$ in increasing order}}$ as being drawn
  from $\D_p\(\bx_0^{(m-K_1+1):(m+K_1)}\)$. Consider the event
  \[ F_2 = E_1 \cap \left\{ \bt\(\indset^{\ell_0:(\ell_0+16M)}_\bw(\wt{\bx})\) \subseteq [m - 10K_0, m + 10K_0] \right\}. \]
  Note that we have the implication
  \[ \bt\(\indset^{\ell_0:(\ell_0+16M)}_\bw(\wt{\bx})\) \not\subseteq [m - 10K_0, m + 10K_0] \qquad\text{and}\qquad [\ell_0 : (\ell_0 + 16M)] \subseteq I \]
  \[ \implies \bt\(\indset^I_\bw(\wt{\bx})\) \not\subseteq [m - 10K_0, m + 10K_0], \]
  which means
  \begin{align}
    \P_\bx\(F_2^c \cap (F_0 \cap F_1 \cap E_1)\) &\le \P_\bx\Big( \bt\(\indset^I_\bw(\wt{\bx})\) \not\subseteq [m - 10K_0, m + 10K_0] \Big) \nonumber \\
    &\le e^{-\Omega_p(K_0)} \cdot p^{2K_0}, \label{eq:F2}
  \end{align}
  where the last inequality follows from the fact that
  $\bx_0^{(m-K_1+1):(m+K_1)}$ is $(K_0, \gamma_p)$-distinguishable
  combined with Lemma \ref{lemma:distinguishability-conseqeunce}.

  Recall from property (i) that $\P_\bx(E_1) \ge \frac{1}{2} p^{2K_0}$. We
  conclude that
  \begin{align*}
    &\P_\bx\Big(|t_{\L(\wt{\bx})} - m| \le 10K_0 \;\Big|\; E_1 \Big) \ge \P_\bx(F_0 \cap F_2 \mid E_1) \\
    &\qquad\qquad \ge 1 - \frac{\P_\bx(F_0^c) + \P_\bx(E_1 \cap F_0 \cap F_1^c) + \P_\bx(E_1 \cap F_0 \cap F_1 \cap F_2^c)}{\P_\bx(E_1)} \\
    &\qquad\qquad \ge 1 - n^{-\Omega(1)} - n^{-\Omega(1)} - e^{-\Omega_p(K_0)} = 1 - e^{-\Omega_p(K_0)},
  \end{align*}
  where we have used \eqref{eq:F0}, \eqref{eq:F1}, and \eqref{eq:F2}
  to bound the numerator appearing in the second line. This proves
  (iii).
\end{proof}

\subsection{Reconstruction}

The following lemma provides a template for how we will reconstruct
bits.

\begin{lemma} \label{lemma:reconstruction-template}
  Consider integers $k_1$ and $k_2$ with $k_1 < k_2$, and let
  $\mathcal{S} \subseteq \{0,1\}^{k_2}$ be a known set of length-$k_2$
  bit strings. Suppose that we have a number $\epsilon > 0$ and a
  family of statistics $b_j : \mathcal{S} \to \R$ for $1 \le j \le
  k_2$ which satisfies the following property: for any two strings
  $\bw, \bw' \in \mathcal{S}$ whose first $k_1$ bits are not
  identical, there exists an index $j_{\bw,\bw'}$ such that
  $|b_{j_{\bw,\bw'}}(\bw) - b_{j_{\bw,\bw'}}(\bw')| > \epsilon$.

  Let $\bz \in \mathcal{S}$ be an unknown string, and suppose that we
  observe estimates $(\hat{b}_j)_{j = 1}^{k_2}$ such that $|\hat{b}_j
  - b_j(\bz)| < \epsilon/2$ for each $j$. Then, we can determine the
  first $k_1$ bits of $\bz$.
\end{lemma}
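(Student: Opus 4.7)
The plan is to use a simple ``find any consistent candidate'' decoding procedure: iterate over all $\bw \in \mathcal{S}$ and check whether $|\hat b_j - b_j(\bw)| \le \epsilon/2$ holds for every $j$ with $1 \le j \le k_2$. I would then output the first $k_1$ bits of any $\bw$ that passes this test as the reconstructed prefix of $\bz$.

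To argue correctness, I would first note that by the hypothesis $|\hat b_j - b_j(\bz)| < \epsilon/2$, the true string $\bz$ itself passes the test, so at least one candidate always exists. The key remaining point is that every candidate $\bw'$ passing the test must agree with $\bz$ on the first $k_1$ bits. Suppose for contradiction that it does not; then by the distinguishability hypothesis there exists an index $j^* = j_{\bz, \bw'}$ with $|b_{j^*}(\bz) - b_{j^*}(\bw')| > \epsilon$. The triangle inequality together with $|\hat b_{j^*} - b_{j^*}(\bz)| < \epsilon/2$ then yields
\[
|\hat b_{j^*} - b_{j^*}(\bw')| \ge |b_{j^*}(\bz) - b_{j^*}(\bw')| - |\hat b_{j^*} - b_{j^*}(\bz)| > \epsilon - \epsilon/2 = \epsilon/2,
\]
contradicting the fact that $\bw'$ passed the test. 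Consequently every consistent candidate matches $\bz$ on its first $k_1$ positions, and the reconstructed prefix is both well-defined and correct.

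I do not anticipate any real obstacle here; the lemma is essentially a bookkeeping statement packaging the $\epsilon$-separation hypothesis with the $\epsilon/2$ error tolerance via the triangle inequality. All of the substantive work lives elsewhere, in constructing the statistics $b_j$ and their estimates $\hat b_j$ that satisfy the hypotheses (which is handled using Lemma \ref{lemma:distinguishing-with-shifts} together with empirical averages over traces).
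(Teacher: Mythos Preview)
Your argument is correct and follows essentially the same triangle-inequality idea as the paper's proof. The only cosmetic difference is that the paper phrases the selection rule as choosing a \emph{dominant} string (one that ``beats'' every competitor at the distinguishing index $j_{\bw,\bw'}$) rather than a string consistent with all estimates to within $\epsilon/2$; both criteria are satisfied by $\bz$ and force agreement on the first $k_1$ bits by the same computation you wrote.
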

\begin{proof}
  For any two strings $\bw, \bw' \in \mathcal{S}$ whose first $k_1$
  bits are not identical, we say that $\bw$ \emph{beats} $\bw'$ if
  $\hat{b}_{j_{\bw,\bw'}}$ is closer to $b_{j_{\bw,\bw'}}(\bw)$ than
  to $b_{j_{\bw,\bw'}}(\bw')$. We say $\bw$ is \emph{dominant} if it
  beats all other strings $\bw' \in \mathcal{S}$ that do not share its
  first $k_1$ bits.

  Our hypotheses imply that $\bz$ is dominant. Moreover, any two
  dominant strings must share their first $k_1$ bits. Thus, we may
  recover the first $k_1$ bits of $\bz$ as the first $k_1$ bits of any
  dominant string.
\end{proof}

We now apply the template in two lemmas. The first lemma shows how to
reconstruct the initial $K_2$ bits, and the second lemma shows how to
reconstruct additional bits once we have already reconstructed a long
enough prefix of $\bx$.

\begin{lemma} \label{lemma:initial-segment}
  Let $\bx \in \{0,1\}^n$ be a good sequence. There is a constant
  $C'_p$ depending only on $p$ such that $N = \ceil{\exp\(C'_p
      \sqrt{\log n}\)}$ independent samples from $\D_p(\bx)$ are
  sufficient to recover the first $K_2$ bits of $\bx$ with probability
  at least $1 - \frac{1}{n}$ for all sufficiently large $n$.
\end{lemma}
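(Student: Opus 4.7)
The plan is to invoke the reconstruction template (Lemma \ref{lemma:reconstruction-template}) with $k_1 = K_2$, candidate class $\mathcal{S} = \{0,1\}^{k_2}$ for some $k_2 = O_p(K_2)$, single-bit-frequency statistics $b_j(\bw) := \P_\bw(\wt{w}_j = 1)$ for $1 \le j \le C_p K_2$, and uniform gap $\epsilon := \exp(-C_p K_2^{1/3}) = \exp(-O_p(\sqrt{\log n}))$. The template's unknown string will be $\bz = \bx^{1:k_2}$, from which we read off the first $K_2$ bits of $\bx$.

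To verify the distinguishing hypothesis, fix any two candidates $\bw, \bw' \in \mathcal{S}$ whose first $K_2$ bits are not identical. If $w_1 \ne w'_1$, the single statistic $b_1$ suffices: conditioning on whether position $1$ is retained gives $b_1(\bw) \le q$ when $w_1 = 0$ and $b_1(\bw) \ge p$ when $w_1 = 1$, so $|b_1(\bw) - b_1(\bw')| \ge p - q = \Omega_p(1) \gg \epsilon$. If instead $w_1 = w'_1$, I apply Lemma \ref{lemma:distinguishing-with-shifts} with deterministic shift $S \equiv 0$, common-prefix parameter $k = 1$, and disagreement length $n = K_2$; the hypotheses $\E|S - \E S| = 0 \le n^{1/3}$ and $k = 1 \le n^{2/3}$ hold immediately, and the lemma produces some $j \le C_p K_2$ with $|b_j(\bw) - b_j(\bw')| \ge \epsilon$. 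The key subtlety is that by setting $k = 1$ rather than the true common-prefix length (which may be as large as $K_2 - 1$), the constraint $k \le n^{2/3}$ stays vacuous and we may keep $n$ as small as $K_2$; this is what makes the exponent $n^{1/3} = K_2^{1/3} = \Theta(\sqrt{\log n})$ rather than the ruinous $K_2^{1/2} = \Theta(\log^{3/4} n)$ that a naive application with $n = K_2^{3/2}$ would produce. This is the step where the machinery of Section \ref{sec:shifted-reconstruction} is doing the real work.

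It remains to feed the template accurate estimates. I use the empirical frequencies $\hat{b}_j := N^{-1} \sum_{i=1}^N \1[\wt{x}^{(i)}_j = 1]$ from the $N$ observed traces of $\bx$, and control two sources of error. First, the bias $|\E\hat{b}_j - b_j(\bz)| = |\P_\bx(\wt{x}_j = 1) - \P_{\bx^{1:k_2}}(\wt{z}_j = 1)|$ is bounded by the probability that the $j$-th retained bit of $\bx$ falls beyond position $k_2$, which by a Chernoff bound on $\Binom(k_2, p)$ is $\exp(-\Omega_p(K_2))$ as soon as $k_2$ is a large enough multiple of $K_2/p$; this is much smaller than $\epsilon$. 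Second, Hoeffding together with a union bound over the $O_p(K_2)$ indices $j$ gives $|\hat{b}_j - \E\hat{b}_j| < \epsilon/4$ simultaneously for all $j$ with probability $\ge 1 - 1/n$, provided $N = \ceil{\exp(C'_p \sqrt{\log n})}$ with $C'_p$ chosen large enough in terms of $C_p$. Combining the two bounds gives $|\hat{b}_j - b_j(\bz)| < \epsilon/2$ uniformly in $j$, and Lemma \ref{lemma:reconstruction-template} then reconstructs $\bz^{1:K_2} = \bx^{1:K_2}$.
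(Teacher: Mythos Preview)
Your proof is correct and follows essentially the same strategy as the paper: apply Lemma~\ref{lemma:distinguishing-with-shifts} with the trivial shift $S\equiv 0$ to obtain a separation $\epsilon = e^{-O_p(K_2^{1/3})}$ in single-bit statistics, then feed Hoeffding-concentrated empirical bit frequencies into Lemma~\ref{lemma:reconstruction-template}. The only differences are cosmetic: the paper simply takes $k_2 = n$ (so there is no truncation bias to control) and does not explicitly split off the case $w_1 \ne w'_1$, which you handle more carefully.
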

\begin{proof}
  \def\xavg{\wt{x}^{\text{avg}}}

  Let $\wt{\bx}_1, \ldots , \wt{\bx}_N$ be the sampled traces. For
  each $j \le n$, let
  \[ \xavg_j = \frac{1}{N} \sum_{i = 1}^N \wt{x}_{i,j} \]
  be the average of the bits of the $\wt{\bx}_i$ at position $j$,
  where $\wt{\bx}_i$ are padded to the right with zeroes.

  We will apply Lemma \ref{lemma:reconstruction-template} with $k_1 =
  K_2$ and $k_2 = n$. We consider statistics $b_j(\bz)$ equal to the
  expected value of the $j$-th bit of a string drawn from
  $\D_p(\bz)$. By Lemma \ref{lemma:distinguishing-with-shifts}, we may
  take $\epsilon = e^{-O_p\(K_2^{1/3}\)} = e^{-O_p\(\log^{1/2} n\)}$.

  Choose $C'_p$ sufficiently large so that $\epsilon^2 N \ge
  e^{\sqrt{\log n}}$. Noting that $\E[\xavg_j] = b_j(\bx)$, by a
  Chernoff bound we have
  \[ \P_\bx( |\xavg_j - b_j(\bx)| > \epsilon/2) \le e^{-\frac{\epsilon^2 N}{2}} \le \frac{1}{n^2} \]
  for all large enough $n$. Thus, letting
  \[ E = \{ \text{$|\xavg_j - b_j(\bx)| \le \epsilon/2$ for each $j$} \} \]
  and union bounding over all $1 \le j \le n$, we have $\P_\bx(E) \ge 1 -
  \frac{1}{n}$.

  Using $\hat{b}_j = \xavg_j$ as our estimates, Lemma
  \ref{lemma:reconstruction-template} asserts that we can recover the
  first $K_2$ bits of $\bx$ on the event $E$, which proves the desired
  statement.
\end{proof}

\begin{lemma} \label{lemma:advance-bit}
  Let $n$ be a positive integer, and let $k$ be an integer with $K_2
  \le k \le n/2$. There is a constant $C'_p$ depending only on $p$
  such that the following holds:

  Consider a good sequence $\bx \in \{0,1\}^n$, and suppose that $N :=
  \ceil{\exp\(C'_p \sqrt{\log n}\)}$ i.i.d.\ samples $\wt{\bx}_1,
  \ldots , \wt{\bx}_N$ are drawn from $\D_p(\bx)$. Then, whenever $n$
  is sufficiently large, seeing only the first $k$ bits of $\bx$ and
  the traces $\wt{\bx}_1, \ldots , \wt{\bx}_N$ is sufficient to
  recover the $(k + 1)$-th bit of $\bx$ with probability at least $1 -
  \frac{1}{n^2}$.
\end{lemma}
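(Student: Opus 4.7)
The plan is to align each of the $N$ traces against the known prefix $\bx_0 := \bx^{1:k}$ using the rule from Lemma~\ref{lemma:m-alignment}, then apply the reconstruction template (Lemma~\ref{lemma:reconstruction-template}) with statistics whose distinguishing power comes from Lemma~\ref{lemma:distinguishing-with-shifts}.

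First I would invoke Lemma~\ref{lemma:m-alignment} to obtain an index $m \in [k - K_2 + K_1, k - K_1]$ and an alignment rule $\L$. For each trace $\wt{\bx}_i$, set $\ell_i := \L(\bx_0, m, \wt{\bx}_i)$ and call trace $i$ \emph{aligned} if $\ell_i < \infty$. By property (i) each trace aligns independently with probability $\geq \tfrac12 p^{2K_0}$, so by a Chernoff bound the number of aligned traces $|A|$ satisfies $|A| \geq \tfrac14 N p^{2K_0} = e^{\Omega_p(\sqrt{\log n})}$ except on an event of probability $\leq 1/(3n^2)$, provided $C'_p$ is large enough. By the adaptedness of $\L$, the suffix $\bs_i := \wt{\bx}_i^{(\ell_i+1):}$ of an aligned trace is distributed as $\D_p(\bx^{(t_{\ell_i}+1):})$. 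Setting $\bz_0 := \bx^{(m-K_1+1):}$ and $S_i := t_{\ell_i} - m + K_1$, we have $\bs_i \sim \D_p(\bz_0^{(S_i+1):})$, and by properties (ii)--(iii), conditional on alignment $S_i \in [0, 2K_1]$ off an event of probability $n^{-\Omega(1)}$ and $\E[|S_i - \E S_i|] = O_p(K_0)$.

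I would then apply Lemma~\ref{lemma:reconstruction-template} with $k_1 := k - m + K_1 + 1$ (so determining the first $k_1$ bits of $\bz_0$ determines $x_{k+1}$), $k_2 := C_p K_2$, and $\mathcal{S}$ the set of length-$k_2$ bit strings whose first $k - m + K_1$ bits match the known initial segment of $\bz_0$. Take statistics $b_j(\bz) := \E[\wt{s}_j]$ where $\wt{\bs} \sim \D_p(\bz^{(S+1):})$ and $S$ has the same distribution as $S_i$, and empirical estimates $\hat{b}_j := |A|^{-1} \sum_{i \in A} \wt{x}_{i,\ell_i + j}$. For any $\bz, \bz' \in \mathcal{S}$ differing in their first $k_1$ bits, Lemma~\ref{lemma:distinguishing-with-shifts} with $n_{\text{there}} = \Theta(K_2)$ and $k_{\text{there}} = 2K_1 + 1$ produces some $j \leq k_2$ with $|b_j(\bz) - b_j(\bz')| \geq \epsilon := e^{-O_p(\sqrt{\log n})}$; the hypotheses $k_{\text{there}} \leq n_{\text{there}}^{2/3}$ and $\E|S - \E S| \leq n_{\text{there}}^{1/3}$ both hold because the calibration $K_0 = \Theta(\log^{1/2} n)$, $K_1 = \Theta(\log n)$, $K_2 = \Theta(\log^{3/2} n)$ makes $K_1^{3/2} \asymp K_0^3 \asymp K_2$. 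A Hoeffding bound over $|A|$ i.i.d.\ aligned samples, union-bounded over $j \leq k_2$, gives $\max_j |\hat{b}_j - b_j(\bz_0)| < \epsilon/2$ except with probability $\leq 1/(3n^2)$ once $C'_p$ is chosen so that $N p^{2K_0} \epsilon^2 \gg \log n$; on this event Lemma~\ref{lemma:reconstruction-template} recovers the first $k_1$ bits of $\bz_0$, hence $x_{k+1}$.

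The subtle point that deserves the most care is that the distribution of $S_i$ is in principle a function of the unknown $\bx$, whereas Lemma~\ref{lemma:distinguishing-with-shifts} wants a fixed shift distribution shared by all candidates. However, property (ii) forces $t_{\ell_i} \leq m + K_1 \leq k$ with probability $1 - n^{-\Omega(1)}$, and on this event the alignment rule only ever reads bits of the known prefix $\bx_0$; so the law of $S_i$ depends only on $\bx_0$ and applies uniformly across $\bz \in \mathcal{S}$, with the residual discrepancy of $n^{-\Omega(1)}$ absorbed harmlessly into the $\epsilon/2$ Hoeffding slack. The remainder of the proof is careful bookkeeping to combine the three failure events (insufficient alignments, Hoeffding deviation, anomalous shift) into a total failure probability $\leq 1/n^2$.
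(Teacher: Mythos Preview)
Your approach is essentially the paper's: obtain $m$ and $\L$ from Lemma~\ref{lemma:m-alignment}, define the shift $S$ as $t_\ell - (m - K_1)$ conditioned on the ``good'' alignment event, feed this into Lemma~\ref{lemma:distinguishing-with-shifts}, and finish via Lemma~\ref{lemma:reconstruction-template}. You also correctly identify the key subtlety, namely that the law of $S$ must be computable from $\bx_0$ alone; the paper handles this exactly as you suggest, by conditioning on $t_\ell \in [m-K_1,\, m+K_1]$ so that adaptedness of $\L$ confines everything to the known prefix.

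There is one genuine gap. You take $\mathcal{S}$ to consist of strings of length $k_2 = C_p K_2$, so your statistics $b_j(\bz)$ are defined for \emph{truncated} inputs, whereas your empirical estimates $\hat b_j$ come from traces of the full suffix $\bz_0 = \bx^{(m-K_1+1):}$ of length $n - m + K_1 \gg k_2$. Your Hoeffding step gives $|\hat b_j - b_j(\bz_0)| < \epsilon/2$, but Lemma~\ref{lemma:reconstruction-template} needs $|\hat b_j - b_j(\bz_0^{1:k_2})| < \epsilon/2$. The difference $|b_j(\bz_0) - b_j(\bz_0^{1:k_2})|$ is bounded by the probability that the $j$-th retained bit (after the shift) lies beyond position $k_2$, and for $j$ near $k_2$ this probability is close to $1$, not small. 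Since the distinguishing index from Lemma~\ref{lemma:distinguishing-with-shifts} can be as large as $C_p n_{\text{there}} = k_2$, you cannot ignore these $j$.

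The paper avoids the issue entirely by taking $k_2 = n - H$, i.e.\ letting $\mathcal{S}$ consist of full-length candidate suffixes; then $\bz_0 \in \mathcal{S}$ and there is no truncation. (It also introduces a separate scale $K_3 = \lceil C''_p \log^{3/2} n\rceil$ with a constant chosen large enough that the hypotheses $2K_1 \le K_3^{2/3}$ and $\E|S - \E S| \le K_3^{1/3}$ of Lemma~\ref{lemma:distinguishing-with-shifts} literally hold, rather than leaving this to an unspecified $\Theta(K_2)$.) An alternative fix within your framework would be to enlarge $k_2$ to a sufficiently large multiple of $C_p n_{\text{there}}/p$ so that for every $j \le C_p n_{\text{there}}$ the truncation error is $e^{-\Omega_p(K_2)} \ll \epsilon$, but you would need to say this explicitly.
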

\begin{proof}
  \def\bavg{\hat{b}^{\text{avg}}}

  Let $m$ and $\L$ be the index and alignment rule given by Lemma
  \ref{lemma:m-alignment}, where we take $\bx_0 = \bx^{1:k}$ to be the
  first $k$ bits of $\bx$ (which we have been given). Let us consider
  a single trace $\wt{\bx} \sim \D_p(\bx)$. For brevity, write $\ell =
  \ell(\wt{\bx}) = \L(\bx_0, m, \wt{\bx})$.

  We say that $\wt{\bx}$ is a \emph{usable} trace if $\ell <
  \infty$. Let $E$ denote the event that $\wt{\bx}$ is usable, and let
  \[ E' = E \cap \{ m - K_1 \le t_{\ell} \le m + K_1 \} \]
  \[ E'' = E \cap \{ m - 10K_0 \le t_{\ell} \le m + 10K_0 \}. \]
  Lemma \ref{lemma:m-alignment} ensures that $\P_\bx(E' \mid E) \ge 1 -
  n^{-\Omega(1)}$ and $\P_\bx(E'' \mid E) \ge 1 - e^{-\Omega_p(K_0)}$,
  which taken together imply that
  \begin{equation} \label{eq:E''|E'}
    \P_\bx(E'' \mid E') \ge 1 - e^{-\Omega_p(K_0)} = 1 - e^{-\Omega_p(\log^{1/2} n)}.
  \end{equation}

  Let $H = m - K_1$, and let $\Delta$ be a random variable having the
  same distribution as $t_{\ell} - H$ conditioned on $E'$. The reason
  for defining $\Delta$ in this particular way will be made clearer
  shortly. For now, let us take note of several properties of
  $\Delta$:
  \begin{itemize}
  \item $\Delta$ is an integer between $0$ and $2K_1$.
  \item The distribution of $\Delta$ can be calculated just by looking
    at $\bx_0$ (in particular, it does not depend on bits of $\bx$
    after the $k$-th one).\footnote{It should be noted that the
      probability $\P_\bx(E)$ of having a usable trace \textit{does}
      depend on later bits of $\bx$. However, the additional
      constraint $m - K_1 \le t_{\ell} \le m + K_1$ combined with the
      adaptedness property of $\L$ removes this dependence.}
  \item By \eqref{eq:E''|E'}, it is straightforward to deduce that $\E
    \Delta = m + O_p(K_0)$ and $\E[|\Delta - \E\Delta|] = O_p(K_0)$.
  \end{itemize}
  Define $K_3 = \ceil{C''_p \log^{3/2} n}$, where $C''_p$ is a
  large enough constant to ensure that
  \[ \E[|\Delta - \E\Delta|] \le K_3^{1/3}, \quad 2K_1 \le K_3^{2/3}, \quad\text{and}\quad K_3 > K_2. \]
  Our goal will be to distinguish the true suffix $\bx^{(H+1):}$ from
  other possible suffixes via Lemma
  \ref{lemma:reconstruction-template}, where we take $(k_1, k_2) =
  (K_3, n - H)$. Here, the set $\mathcal{S}$ is taken to be all
  strings of length $n - H$ having $\bx^{(H+1):k}$ as a prefix. By
  reconstructing the first $K_3$ bits of $\bx^{(H+1):}$, we will have
  in particular reconstructed $x_{k+1}$, since
  \[ H + K_3 = m - K_1 + K_3 > k + 1. \]
  The statistics we use are, for any $\bz \in \mathcal{S}$,
  \[ b_j(\bz) := \text{expected value of the $j$-th bit of a string drawn from
    $\D_p(\bz^{(\Delta+1):})$}, \] and we take $\epsilon = e^{-C_p
    K_3^{1/3}}$ (it may be helpful to keep in mind that $\epsilon =
  e^{-\Theta_p(\log^{1/2} n)}$). Note that we are able to compute
  these quantities $b_j(\bz)$ since we are able to compute the
  distribution of $\Delta$.

  Let us first verify the property required of the $b_j$ and
  $\epsilon$ in Lemma \ref{lemma:reconstruction-template}. Consider
  any two strings $\bw, \bw' \in \mathcal{S}$ that do not agree in
  their first $K_3$ bits. We apply Lemma
  \ref{lemma:distinguishing-with-shifts} to these strings with $(k, n,
  S) = (2K_1, K_3, \Delta)$. To check the hypotheses of the lemma,
  note that by the definition of $\mathcal{S}$ and the assumption $m
  \le k - K_1$, $\bw$ and $\bw'$ agree in their first $k - H = k + K_1
  - m \ge 2K_1$ bits, as required. We also recall that by the way we
  defined $K_3$, the conditions
  \[ \E[|\Delta - \E\Delta|] = O_p(K_0) \le K_3^{1/3}, \qquad 2K_1 \le K_3^{2/3} \]
  are satisfied. Thus, Lemma \ref{lemma:distinguishing-with-shifts}
  tells us that there exists an index $j_{\bw,\bw'}$ for which
  \[ |b_j(\bw) - b_j(\bw')| \ge e^{-C_p K_3^{1/3}} = \epsilon, \]
  establishing that our choice of $b_j$ and $\epsilon$ are suitable
  for use in Lemma \ref{lemma:reconstruction-template}.

  Unfortunately, we cannot directly observe samples with the law of
  $\D_p(\bx^{(H+1+\Delta):})$ in order to estimate
  $b_j(\bx^{(H+1):})$. However, a usable trace $\wt{\bx}$ allows us to
  sample from this distribution approximately. The fact that $\L$ is
  adapted to $\wt{\bx}$ (as required in Definition
  \ref{def:alignment-rule}) means that if we condition on
  $t_{\ell(\wt{\bx})} = h$ for some index $h$, the string
  $\wt{\bx}^{(\ell+1):}$ has the same distribution as
  $\D_p(\bx^{(h+1):})$. Thus, the definition of $\Delta$ ensures that,
  conditioned on the event $E'$, $\wt{\bx}^{(\ell+1):}$ has exactly
  the law of $\D_p(\bx^{(H+1+\Delta):})$.

  As long as $\wt{\bx}$ is usable, we define $\hat{b}_j(\wt{\bx}) :=
  \wt{\bx}_{\ell+j}$ and $\overline{b}_j = \E(\hat{b}_j(\wt{\bx}) \mid
  E)$. The above discussion implies that
  \begin{equation} \label{eq:b-bar}
    \left| \overline{b}_j - b_j(\bx^{(H+1):}) \right| \le \P_\bx(E'^c \mid E) \le n^{-\Omega(1)} \le \epsilon/4,
  \end{equation}
  where the bound on $\P_\bx(E'^c \mid E)$ comes from Lemma
  \ref{lemma:m-alignment}.

  Averaging over our $N$ traces $\wt{\bx}_1, \ldots , \wt{\bx}_N$ will
  then give us a fairly good estimate on $b_j(\bx^{(H+1):})$. Choose
  $C'_p$ large enough so that the following hold:
  \begin{align}
    N \ge 64 p^{-6K_0} &\implies \frac{1}{2} p^{2K_0} N \ge 2N^{2/3} \label{eq:N-def1} \\
    N \ge \epsilon^{-3} e^{\sqrt{\log n}} &\implies N^{2/3}\epsilon^2 = e^{\Omega(\sqrt{\log n})}. \label{eq:N-def2}
  \end{align}
  Let $M$ be the number of usable traces. Since our alignment rule
  ensures that the probability of being usable is at least
  $\frac{1}{2} p^{2K_0}$, it follows by a Chernoff bound and
  \eqref{eq:N-def1} that
  \[ \P_\bx(M < N^{2/3}) \le \exp\(- \frac{2N^{4/3}}{N} \) = \exp\(-2 e^{\Omega(\sqrt{\log n})} \) \le \frac{1}{n^3}. \]
  Define
  \[ \bavg_j = \frac{1}{M} \sum_{\text{$\wt{\bx}_i$ is usable}} \hat{b}_j(\wt{\bx}_i). \]
  By another Chernoff bound and \eqref{eq:N-def2},
  \begin{align}
    \P_\bx\(|\bavg_j - \overline{b}_j| > \epsilon/4 \) &\le \P_\bx(M < N^{2/3}) + \exp\( -\frac{-N^{2/3} \epsilon^2}{8} \) \nonumber \\
    & \le \frac{1}{n^3} + \exp\( - e^{\Omega(\sqrt{\log n})} \) \le \frac{1}{n^2}. \label{eq:b-avg}
  \end{align}
  Combining \eqref{eq:b-bar} and \eqref{eq:b-avg}, we conclude that
  \[ \P_\bx\( \text{$|\bavg_j - b_j(\bx^{(H+1):})| < \epsilon/2$ for all $j \le n$} \) \ge 1 - \frac{1}{n^2}. \]
  Thus, with probability at least $1 - \frac{1}{n^2}$, the conclusion
  of Lemma \ref{lemma:reconstruction-template} allows us to determine
  the first $K_3$ bits of $\bx^{(H+1):}$. As noted earlier, this
  includes the $(k+1)$-th bit of $\bx$, as desired.
\end{proof}

\subsection{Completing the proof}

We are finally ready to prove Theorem
\ref{thm:subpoly-reconstruction}, which is mostly a matter of
combining Lemmas \ref{lemma:initial-segment} and
\ref{lemma:advance-bit}.

\begin{proof}[Proof of Theorem \ref{thm:subpoly-reconstruction}]
  We sample $N = \ceil{\exp\(C'_p \sqrt{\log n}\)}$ traces, where
  $C'_p$ is large enough so that Lemmas
  \ref{lemma:initial-segment} and \ref{lemma:advance-bit} apply.

  We first condition on a realization $\bX = \bx$, and suppose that
  $\bx$ is good. We will construct a string $\hat{\bx} = (\hat{x}_1,
  \hat{x}_2, \ldots , \hat{x}_n)$. Let $E_k$ denote the event that
  $\hat{\bx}$ matches $\bx$ in the first $k$ bits. We construct the
  first $K_2$ bits of $\hat{\bx}$ using Lemma
  \ref{lemma:initial-segment}, which yields
  \begin{equation} \label{eq:initial-bound}
    \P_\bx(E_{K_2}) \ge 1 - \frac{1}{n}.
  \end{equation}

  Next, consider any $k$ with $K_2 \le k \le n/2$, and suppose we have
  constructed $\hat{x}_1, \ldots , \hat{x}_k$ already. We apply the
  algorithm of Lemma \ref{lemma:advance-bit} and set $\hat{x}_{k+1}$
  to its output. Although we do not have access to the first $k$ bits
  of $\bx$, we use the first $k$ bits of $\hat{\bx}$ instead. As long
  as $E_k$ holds, this will give us the correct value for $\hat{x}_{k
    + 1}$ with probability at least $1 - \frac{1}{n^2}$. Thus,
  \begin{equation} \label{eq:successive-bound}
    \P_\bx(E_{k+1}) \ge \P_\bx(E_k) - \frac{1}{n^2}.
  \end{equation}

  Using \eqref{eq:initial-bound} followed by repeated applications of
  \eqref{eq:successive-bound}, we find that
  \[ \P_\bx(E_{\ceil{n/2}}) \ge 1 - \frac{1}{n}. \]
  By symmetry, we can repeat the same procedure in reverse to
  reconstruct the last $\ceil{n/2}$ bits of $\bx$. Accounting for both
  the forward and reverse steps, the probability of failure is at most
  $\frac{2}{n}$.

  The final possible mode of failure is if $\bx$ is not good. However,
  by Lemma \ref{lemma:most-good}, this only happens with probability
  at most $\frac{1}{n}$. In total, we can reconstruct $\bX$ with
  probability at least $1 - \frac{3}{n}$. Moreover, we have only used
  $N = e^{O_p(\sqrt{\log n})}$ traces. This completes the proof.
\end{proof}


\bibliographystyle{plain}
\bibliography{rr}

\end{document}